   \newcommand\SkipToFmtEnd{}%
   \newcommand\EndFmtInput{}%
   \long\def\SkipToFmtEnd#1\EndFmtInput{}%
\newcommand\ReadOnlyOnce[1]{\@ifundefined{#1}{\@namedef{#1}{}}\SkipToFmtEnd}
\DeclareFontFamily{OT1}{cmtex}{}
\DeclareFontShape{OT1}{cmtex}{m}{n}
  {<5><6><7><8>cmtex8
   <9>cmtex9
   <10><10.95><12><14.4><17.28><20.74><24.88>cmtex10}{}
\DeclareFontShape{OT1}{cmtex}{m}{it}
  {<-> ssub * cmtt/m/it}{}
\DeclareFontShape{OT1}{cmtt}{bx}{n}
  {<5><6><7><8>cmtt8
   <9>cmbtt9
   <10><10.95><12><14.4><17.28><20.74><24.88>cmbtt10}{}
\DeclareFontShape{OT1}{cmtex}{bx}{n}
  {<-> ssub * cmtt/bx/n}{}
\newcommand{\Conid}[1]{\mathit{#1}}
\newcommand{\Varid}[1]{\mathit{#1}}
\newcommand{\anonymous}{\kern0.06em \vbox{\hrule\@width.5em}}
\newdimen\mathindent\mathindent\leftmargini}%
\def\resethooks{%
  \global\let\SaveRestoreHook\empty
  \global\let\ColumnHook\empty}
\newcommand*{\savecolumns}[1][default]%
  {\g@addto@macro\SaveRestoreHook{\savecolumns[#1]}}
\newcommand*{\restorecolumns}[1][default]%
  {\g@addto@macro\SaveRestoreHook{\restorecolumns[#1]}}
\newcommand*{\aligncolumn}[2]%
  {\g@addto@macro\ColumnHook{\column{#1}{#2}}}
\newcommand{\onelinecommentchars}{\quad-{}- }
\newcommand{\commentbeginchars}{\enskip\{-}
\newcommand{\commentendchars}{-\}\enskip}
\newcommand{\visiblecomments}{%
  \let\onelinecomment=\onelinecommentchars
  \let\commentbegin=\commentbeginchars
  \let\commentend=\commentendchars}
\newcommand{\invisiblecomments}{%
  \let\onelinecomment=\empty
  \let\commentbegin=\empty
  \let\commentend=\empty}
\newlength{\blanklineskip}
\newcommand{\hsindent}[1]{\quad}
\let\hspre\empty
\let\hspost\empty
\newcommand{\hsnewpar}[1]%
  {{\parskip=0pt\parindent=0pt\par\vskip #1\noindent}}
\newcommand{\hscodestyle}{}
\newcommand{\sethscode}[1]%
  {\expandafter\let\expandafter\hscode\csname #1\endcsname
   \expandafter\let\expandafter\endhscode\csname end#1\endcsname}
   \let\hspre\(\let\hspost\)%
   \let\hspre\(\let\hspost\)%
\newcommand{\plainhs}{\sethscode{plainhscode}}
\def\codeframewidth{\arrayrulewidth}
   \let\endoflinesave=\\
   \framedhslinecorrect\endoflinesave{.5ex}\hline
\newcommand{\framedhslinecorrect}[2]%
  {#1[#2]}
\def\column##1##2{}%
   \newcommand\>[1][]{}\newcommand\<[1][]{}\newcommand\\[1][]{}%
   \def\fromto##1##2##3{##3}%
\let\orighscode=\hscode
   \let\origendhscode=\endhscode
   \def\endhscode{\def\hscode{\endgroup\def\@currenvir{hscode}\\}\begingroup}
\def\hscode{\endgroup\def\@currenvir{hscode}}}%
   \global\let\hscode=\orighscode
   \global\let\endhscode=\origendhscode}%
\newcommand{\skippres}[1]{}
\newtheorem{thm}{Theorem}[section]
\newtheorem{lem}[thmx]{Lemma}
\newcommand{\defqed}{\hfill$\diamond$\vspace{1ex}}
\newcommand{\exqed}{\hfill$\lhd$\vspace{1ex}}
\newcommand{\coq}{Coq}
\newcommand{\xml}{XML}
\newcommand{\json}{JSON}
\newcommand{\TRX}{TRX}
\newcommand{\LLk}{\textit{LL(k)}}
\newcommand{\LRk}{\textit{LR(k)}}
\newcommand{\LALRk}{\textit{LALR(k)}}
\newcommand{\comment}[1]{}
\newcommand{\GG}{\mathcal{G}}
\newcommand{\String}{\mathcal{S}}
\newcommand{\VV}{\mathcal{V}}
\newcommand{\nat}{\mathbb{N}}
\newcommand{\NT}{\VV_N}
\newcommand{\T}{\VV_T}
\newcommand{\ES}{\mathid{E}}
\newcommand{\PD}{\mathid{P_{exp}}}
\newcommand{\PS}{v_{\mathid{start}}}
\newcommand{\PT}{\mathid{P_{type}}}
\newcommand{\propSet}{\mathbb{P}}
\newcommand{\propEsym}{0}
\newcommand{\propNEsym}{{>0}}
\newcommand{\propNFsym}{\ge0}
\newcommand{\propFsym}{\bot}
\newcommand{\propSNE}{\propSet_{\propNEsym}}
\newcommand{\propSF}{\propSet_{\propFsym}}
\newcommand{\pegProp}[2]{#1 \in \propSet_{#2}}
\newcommand{\pegNProp}[2]{#1 \notin \propSet_{#2}}
\newcommand{\notPropE}[1]{\pegNProp{#1}{\propEsym}}
\newcommand{\propE}[1]{\pegProp{#1}{\propEsym}}
\newcommand{\propNE}[1]{\pegProp{#1}{\propNEsym}}
\newcommand{\propNF}[1]{\pegProp{#1}{\propNFsym}}
\newcommand{\propF}[1]{\pegProp{#1}{\propFsym}}
\newcommand{\WFset}{\mathid{WF}}
\newcommand{\WF}[1]{#1 \in \WFset}
\newcommand{\choiceOP}{/}
\newcommand{\choiceOPs}{\xx\choiceOP\xx}
\newcommand{\seqOP}{;}
\newcommand{\seqOPs}{\, \seqOP \,}
\newcommand{\plusOP}{+}
\newcommand{\emptye}{\epsilon}
\newcommand{\anychar}{[\cdot]}
\newcommand{\term}[1]{[#1]}
\newcommand{\lit}[1]{[\text{``$#1$''}]}
\newcommand{\xx}{\ \ }
\newcommand{\pseq}[2]{#1 \seqOP #2}
\newcommand{\pchoice}[2]{#1 \choiceOP #2}
\newcommand{\prep}[1]{#1*}
\newcommand{\pnot}[1]{!#1}
\newcommand{\pcoerce}[2]{#1 [\mapsto] #2}
\newcommand{\pcoerces}[2]{\pcoerce{#1 \xx}{\xx #2}}
\newcommand{\pcoercef}[2]{\pcoerces{#1}{{\, \lambda x\,.\,#2}}}
\newcommand{\dropResult}{[\sharp]}
\newcommand{\prange}[2]{[#1\!\!-\!\!#2]}
\newcommand{\pplus}[1]{{#1 \plusOP}}
\newcommand{\pand}[1]{\& #1}
\newcommand{\popt}[1]{{#1 ?}}
\newcommand{\mathid}[1]{\operatorname{#1}}
\newcommand{\imp}{\Rightarrow}
\newcommand{\set}[1]{\{#1\}}
\newcommand{\emptyList}{[]}
\newcommand{\cons}[2]{#1::#2}
\newcommand{\xs}{xs}
\newcommand{\xxs}{\cons{x}{\xs}}
\newcommand{\vs}{vs}
\newcommand{\vvs}{\cons{v}{\vs}}
\newcommand{\pegsemX}[5]{(#1, #2) \stackrel{\scriptscriptstyle #3}{#5} #4}
\newcommand{\pegsemx}[4]{\pegsemX{#1}{#2}{#3}{#4}{\leadsto}}
\newcommand{\notpegsemx}[3]{\pegsemX{#1}{#2}{}{#3}{\not\leadsto}}
\newcommand{\typei}[2]{#1\, :\, #2}
\newcommand{\pex}{\Delta}
\newcommand{\Pex}[1]{\Delta_{#1}}
\newcommand{\fail}{\bot}
\newcommand{\ok}[1]{\surd_{\!#1}}
\newcommand{\Ok}[2]{\surd^{\,#1}_{\,#2}}
\newcommand{\type}[1]{\mathrm{#1}}
\newcommand{\Type}{\type{Type}}
\newcommand{\TrueT}{\type{True}}
\newcommand{\Char}{\type{char}}
\newcommand{\StringT}{\type{string}}
\newcommand{\List}[1]{\type{list}\,#1}
\newcommand{\Option}[1]{\type{option}\,#1}
\newcommand{\NoneT}{\type{None}}
\newcommand{\SomeT}[1]{\type{Some}\;#1}
\newcommand{\ie}{\textit{i.e.}}
\newcommand{\resp}{\textit{resp.}}
\newcommand{\strlen}[1]{|#1|}
\newcommand{\subexp}{\sqsubseteq}
\newcommand{\NTid}[1]{\mathtt{#1}}
\newcommand{\NTnumber}{\NTid{number}}
\newcommand{\NTws}{\NTid{ws}}
\newcommand{\NTterm}{\NTid{term}}
\newcommand{\NTfactor}{\NTid{factor}}
\newcommand{\NTexpr}{\NTid{expr}}
\newcommand{\NTif}{\NTid{IF}}
\newcommand{\NTreserved}{\NTid{reserved}}
\newcommand{\NTident}{\NTid{identifier}}
\newcommand{\NTletter}{\NTid{letter}}
\newcommand{\NTstmt}{\NTid{stmt}}
\newcommand{\NCif}{\NTid{IF}}
\newcommand{\NCelse}{\NTid{ELSE}}
\newcommand{\NClp}{\NTid{(}}
\newcommand{\NCrp}{\NTid{)}}
\newcommand{\digListToNat}{\mathid{digListToNat}}
\newcommand{\coqid}[1]{\textit{#1}}
\theoremstyle{definition}
\def\doi{7 (2:18) 2011}
\begin{document}

\title{TRX: A Formally Verified Parser Interpreter\rsuper*}

\author[A.~Koprowski]{Adam Koprowski}
\address{MLstate, Paris, France}
\email{Adam.Koprowski@mlstate.com, Henri.Binsztok@mlstate.com}

\author[H.~Binsztok]{Henri Binsztok}
\address{\vskip-6 pt}

\keywords{parser generation, formal verification, coq proof assistant, parsing expression grammars, recursive descent parsing}
\subjclass{D.3.4, D.2.4, F.3.1, F.4.2}

\titlecomment{{\lsuper*}An extended abstract of this paper appeared in the Proceedings of the 19th
European Symposium on Programming \cite{KopBin10esop}.}

\begin{abstract}
Parsing is an important problem in computer science and yet
surprisingly little attention has been devoted to its formal verification.
In this paper, we present TRX: a parser interpreter formally developed
in the proof assistant \coq, capable of producing formally correct parsers. 
We are using parsing expression grammars (PEGs), a formalism essentially 
representing recursive descent parsing, which we consider an attractive 
alternative to context-free grammars (CFGs). From this formalization we 
can extract a parser for an arbitrary PEG grammar with the warranty of 
total correctness, i.e., the resulting parser is terminating and correct 
with respect to its grammar and the semantics of PEGs; both properties 
formally proven in \coq.
\end{abstract}

\maketitle

\section{Introduction}\label{sec:intro}

Parsing is of major interest in computer science.
Classically discovered by students as the first step in compilation, parsing is present in almost every program which performs data-manipulation. 

For instance, the Web is built on parsers. The HyperText Transfer Protocol (HTTP) is a parsed dialog between the client, or browser, and the server. This protocol transfers pages in HyperText Markup Language (HTML), which is also parsed by the browser.  When running web-applications, browsers interpret JavaScript programs which, again, begins with parsing.  Data exchange between browser(s) and server(s) uses languages or formats like \xml\ and \json. Even inside the server, several components (for instance the trio made of the HTTP server Apache, the PHP interpreter and the MySQL database) often manipulate programs and data dynamically; all require parsers.

Parsing is not limited to compilation or the Web: securing data flow entering a
network, signaling mobile communications, and manipulating domain specific languages
(DSL) all require a variety of parsers.

The most common approach to parsing is by means of \textit{parser generators}, which
take as input a grammar of some language and generate the source code of a parser for 
that language. They are usually based on regular expressions (REs) 
and context-free grammars (CFGs), the latter expressed in Backus-Naur Form (BNF) syntax.
They typically are able to deal with some subclass of context-free languages, the
popular subclasses including \LLk, \LRk\ and \LALRk\ grammars. Such grammars are usually
augmented with semantic actions that are used to produce a parse tree or an abstract
syntax tree (AST) of the input.

What about \emph{correctness} of such parsers? 
Yacc is the most widely used parser generator and a mature program and yet the reference book about this tool~\cite{lexyacc} devotes a whole section (``Bugs in Yacc'') to discuss common bugs in its distributions.
Furthermore, the code generated by such tools often contains huge parsing tables making it near impossible for manual inspection and/or verification. In the recent article about CompCert~\cite{Ler09}, an impressive project formally verifying a compiler for a large subset of C, the introduction starts with a question ``Can you trust your compiler?''.  Nevertheless, the formal verification starts on the level of the AST and does not concern the parser~\cite[Figure 1]{Ler09}.
Can you trust your parser?

\emph{Parsing expression grammars} (PEGs) \cite{For04} are an alternative to CFGs,
that have recently been gaining popularity. In contrast to CFGs they are unambiguous and allow 
easy integration of lexical analysis into the parsing phase. Their implementation is easy, 
as PEGs are essentially a declarative way of specifying recursive descent parsers \cite{Bur75}. 
With their backtracking and unlimited look-ahead capabilities they are expressive enough to 
cover all \LLk\ and \LRk\ languages as well as some non-context-free ones. However, recursive 
descent parsing of grammars that are not \LLk\ may require exponential time. A solution to 
that problem is to use memoization giving rise to \emph{packrat parsing} and ensuring linear 
time complexity at the price of higher memory consumption~\cite{AhoUll72,For02,For02mth}.
It is not easy to support (indirect) left-recursive rules in PEGs, as they lead to 
non-terminating parsers \cite{WarEA08}.

In this paper we present \TRX: a PEG-based parser interpreter \emph{formally developed} in the
proof assistant \coq\ \cite{Coq,BerCas04}. As a result, expressing a grammar in \coq\
allows one, via its extraction capabilities \cite{Let08}, to obtain a parser for this grammar
with \emph{total correctness guarantees}. That means that the resulting parser is terminating 
and correct with respect to its grammar and the semantics of PEGs; both of those 
properties formally proved in \coq. Moreover every definition and theorem presented in 
this paper has been expressed and verified in \coq.

Our emphasis is on the \emph{practicality} of such a tool. We perform two case studies: on a simple
XML format but also on the full grammar of the Java language. We present benchmarks indicating
that the performance of obtained parsers is reasonable. We also sketch ideas on how it can be
improved further, as well as how TRX could be extended into a tool of its own, freeing its 
users from any kind of interaction with Coq and broadening its applicability.

This work was carried out in the context of improving safety and security of OPA (One Pot Application):
an integrated platform for web development \cite{OPA}. As mentioned above parsing is of
uttermost importance for web-applications and \TRX\ is one of the components in the 
OPA platform.

The remainder of this paper is organized as follows. We introduce PEGs in Section~\ref{sec:pegs} 
and in Section~\ref{sec:pegs-actions} we extend them with semantic actions.
Section~\ref{sec:pegs-wf} describes a method for checking that there is no (indirect)
left recursion in a grammar, a result ensuring that parsing will terminate.
Section~\ref{sec:pegs-int} reports on our experience with putting the ideas of the preceding
sections into practice and implementing a formally correct parser interpreter in \coq.
Section~\ref{sec:pegs-ex} is devoted to a practical evaluation of this interpreter and contains 
case studies of extracting \xml\ and Java parsers from it, presenting a benchmark of
\TRX\ against other parser generators and giving an account of our experience with 
extraction. We discuss related work in Section~\ref{sec:relwork}, present ideas for
extensions and future work in Section~\ref{sec:discussion} and conclude in Section~\ref{sec:concl}.

\section{Parsing Expression Grammars (PEGs)}\label{sec:pegs}

The content of this section is a different presentation of the 
results by Ford~\cite{For04}. For more details we refer to the original article.
For a general overview of parsing we refer to, for instance, Aho, Seti \& Ullman~\cite{AhoEA86}.

PEGs are a formalism for parsing that is an interesting alternative to 
CFGs. We will formally introduce them along with their semantics in
Section~\ref{sec:peg_def}. PEGs are gaining popularity recently due to 
their ease of implementation and some general desirable properties that 
we will sketch in Section~\ref{sec:peg_vs_cfg}, while comparing them to CFGs.

\subsection{Definition of PEGs}\label{sec:peg_def}

\begin{defi}[Parsing expressions]\label{pexp}
We introduce a set of \emph{parsing expressions}, $\pex$, over a finite
set of terminals $\T$ and a finite set of non-terminals $\NT$. 
We denote the set of strings as $\String$ and a string $s \in \String$ 
is a list of terminals $\T$.
The inductive definition of $\pex$ is given in Figure~\ref{peg-exp-fig}.
\defqed
\end{defi}

\begin{figure}[t!]
\begin{center}
\[
\begin{array}{r@{\;\;}l@{\;\;}rl@{\;\;}ll@{\;\;}rl}
\pex ::=
     & \emptye            & \text{empty expr.}                               & &
\mid & \pchoice{e_1}{e_2} & \text{a \emph{prioritized} choice}               & (e_1, e_2 \in \pex) \\
\mid & \anychar           & \text{any character}                             & &
\mid & \prep{e}           & \text{a $\ge 0$ \textit{greedy} repetition} & (e \in \pex) \\
\mid & \term{a}           & \text{a terminal}                                & (a \in \T) &
\mid & \pplus{e}          & \text{a $\ge 1$ \textit{greedy} repetition}  & (e \in \pex) \\
\mid & \lit{s}            & \text{a literal}                                 & (s \in \String) &
\mid & \popt{e}           & \text{an optional expression}                    & (e \in \pex) \\
\mid & \prange{a}{z}      & \text{a range}                                   & (a, z \in \T) &
\mid & \pnot{e}           & \text{a not-predicate}                           & (e \in \pex) \\
\mid & A                  & \text{a non-terminal}                            & (A \in \NT) &
\mid & \pand{e}           & \text{an and-predicate}                          & (e \in \pex) \\
\mid & \pseq{e_1}{e_2}    & \text{a sequence}                                & (e_1, e_2 \in \pex)
\end{array}
\]
\end{center}
\caption{Parsing expressions}
\label{peg-exp-fig}
\end{figure}

Later on we will present the formal semantics but 
for now we informally describe the language expressed by such parsing expressions.

\begin{iteMize}{$\bullet$}
  \item \emph{Empty expression} $\emptye$ always succeeds without consuming 
    any input.
  \item \emph{Any-character} $\anychar$, a \emph{terminal} $\term{a}$ and
    a \emph{range} $[a-z]$ all consume a single terminal from the input but they expect
    it to be, respectively: an arbitrary terminal, precisely $a$ and in the range
    between $a$ and $z$.
  \item \emph{Literal} $\lit{s}$ reads a string (\ie, a sequence of terminals) $s$ from
    the input.
  \item Parsing a \emph{non-terminal} $A$ amounts to parsing the expression 
    defining $A$.
  \item A \emph{sequence} $\pseq{e_1}{e_2}$ expects an input conforming to 
    $e_1$ followed by an input conforming to $e_2$.
  \item A \emph{choice} $\pchoice{e_1}{e_2}$ expresses a \emph{prioritized} 
    choice between $e_1$ and $e_2$. This means that $e_2$ will be tried only if
    $e_1$ fails.
  \item A \emph{zero-or-more (\resp\ one-or-more) repetition} 
    $\prep{e}$ (\resp\ $\pplus{e}$) consumes zero-or-more (\resp\ one-or-more) 
    repetitions of $e$ from the input. Those operators are \emph{greedy}, \ie,
    the longest match in the input, conforming to $e$, will be consumed.
  \item An \emph{and-predicate (\resp\ not-predicate)} $\pand{e}$ (\resp\ $\pnot{e}$) 
    succeeds only if the input conforms to $e$ (\resp\ does not conform to $e$) but does 
    not consume any input.
\end{iteMize}\medskip

\noindent We now define PEGs, which are essentially a finite set of
non-terminals, also referred to as \emph{productions}, with their
corresponding parsing expressions.

\begin{defi}[Parsing Expressions Grammar (PEG)]\label{peg}
  A parsing expressions grammar (PEG), $\GG$, is a tuple
$(\T, \NT, \PD, \PS)$, where:
\begin{iteMize}{$\bullet$}
  \item $\T$ is a finite set of terminals,
  \item $\NT$ is a finite set of non-terminals,
  \item $\PD$ is the interpretation of the productions, \ie, $\PD : \NT \to \pex$ and
  \item $\PS$ is the start production, $\PS \in \NT$.  \defqed
\end{iteMize}
\end{defi}

We will now present the formal semantics of PEGs. 
The semantics is given by means of tuples $\pegsemx{e}{s}{m}{r}$, which indicate 
that parsing expression $e \in \pex$ applied on a string $s \in \String$ gives,
in $m$ steps, the result $r$, where $r$ is either $\fail$, denoting that parsing failed, 
or $\ok{s'}$, indicating that parsing succeeded and $s'$ is what remains to be 
parsed. We will drop the $m$ annotation whenever irrelevant.

\begin{figure}[t!]
\begin{center}
\[
\begin{array}{c@{\qquad}c@{\qquad}c}
\inferrule{ }{\pegsemx{\emptye}{s}{1}{\ok{s}}} &
\inferrule{\pegsemx{\PD(A)}{s}{n}{r}}{\pegsemx{A}{s}{n+1}{r}} &
\inferrule{ }{\pegsemx{\anychar}{\xxs}{1}{\ok{\xs}}} \\[15pt]
\inferrule{ }{\pegsemx{\anychar}{\emptyList}{1}{\fail}} &
\inferrule{ }{\pegsemx{\term{x}}{\xxs}{1}{\ok{\xs}}} &
\inferrule{ }{\pegsemx{\term{x}}{\emptyList}{1}{\fail}} \\[15pt]
\inferrule{x \neq y}{\pegsemx{\term{y}}{\xxs}{1}{\fail}} &
\inferrule{\pegsemx{e}{s}{m}{\fail}}{\pegsemx{\pnot{e}}{s}{m+1}{\ok{s}}} &
\inferrule{\pegsemx{e}{s}{m}{\ok{s'}}}{\pegsemx{\pnot{e}}{s}{m+1}{\fail}} \\[15pt]
\inferrule{\pegsemx{e_1}{s}{m}{\fail}}{\pegsemx{\pseq{e_1}{e_2}}{s}{m+1}{\fail}} &
\inferrule{\pegsemx{e_1}{s}{m}{\ok{s'}} \\ \pegsemx{e_2}{s'}{n}{r}}{\pegsemx{\pseq{e_1}{e_2}}{s}{m+n+1}{r}} &
\inferrule{\pegsemx{e_1}{s}{m}{\fail} \\ \pegsemx{e_2}{s}{n}{r}}{\pegsemx{\pchoice{e_1}{e_2}}{s}{m+n+1}{r}} \\[15pt]
\inferrule{\pegsemx{e_1}{s}{m}{\ok{s'}}}{\pegsemx{\pchoice{e_1}{e_2}}{s}{m+1}{\ok{s'}}} &
\inferrule{\pegsemx{e}{s}{m}{\ok{s'}} \\ \pegsemx{\prep{e}}{s'}{n}{\ok{s''}}}{\pegsemx{\prep{e}}{s}{m+n+1}{\ok{s''}}} &
\inferrule{\pegsemx{e}{s}{m}{\fail}}{\pegsemx{\prep{e}}{s}{m+1}{\ok{s}}}
\end{array}
\]
\end{center}
\caption{Formal semantics of PEGs}
\label{peg-sem-fig}
\end{figure}

The complete semantics is presented in Figure~\ref{peg-sem-fig}. Please note
that the following operators from Definition~\ref{pexp} can be derived and
therefore are not included in the semantics:
\[
\begin{array}{r @{\ \ ::=\ \ } l @{\qquad\qquad} r @{\ \ ::=\ \ } l @{\qquad\qquad} r @{\ \ ::=\ \ } l}
\prange{a}{z} & \term{a} \choiceOPs \ldots \choiceOPs \term{z} &
\pplus{e}     & \pseq{e}{\prep{e}} &
\pand{e}      & \pnot{\pnot{e}} \\
\lit{s}       & \term{s_0} \seqOPs \ldots \seqOPs \term{s_n} &
\popt{e}      & \pchoice{e}{\emptye}
\end{array}
\]


\subsection{CFGs vs PEGs}\label{sec:peg_vs_cfg}

The main differences between PEGs and CFGs are the following:
\begin{iteMize}{$\bullet$}
  \item the choice operator, $\pchoice{e_1}{e_2}$, is \emph{prioritized}, \ie, 
    $e_2$ is tried only if $e_1$ fails;
  \item the repetition operators, $\prep{e}$ and $\pplus{e}$, are \emph{greedy},
    which allows to easily express ``longest-match'' parsing, which is almost
    always desired;
  \item \emph{syntactic predicates} \cite{ParQuo94}, $\pand{e}$ and $\pnot{e}$, both of which consume
    no input and succeed if $e$, respectively, succeeds or fails. This effectively
    provides an \emph{unlimited look-ahead} and, in combination with choice, 
    limited \emph{backtracking} capabilities.
\end{iteMize}\medskip

\noindent An important consequence of the choice and repetition operators being deterministic 
(choice being prioritized and repetition greedy) is the fact that PEGs are 
\emph{unambiguous}. We will see a formal proof of that in Theorem~\ref{thm:peg_unambig}.
This makes them unfit for processing natural languages, 
but is a much desired property when it comes to grammars for programming 
languages. 

Another important consequence is ease of implementation. Efficient algorithms are
known only for certain subclasses of CFGs and they tend to be rather complicated. 
PEGs are essentially a declarative way of specifying \emph{recursive descent parsers}~\cite{Bur75}
and performing this type of parsing for PEGs is straightforward (more
on that in Section~\ref{sec:pegs-int}). By using the technique of
\emph{packrat parsing} \cite{AhoUll72,For02}, \ie, essentially adding memoization to the
recursive descent parser, one obtains parsers with linear time complexity guarantees. 
The downside of this approach is high memory requirements: the worst-time space complexity 
of PEG parsing is linear in the size of the input, but with packrat parsing the
constant of this correlation can be very high. For instance Ford reports on a factor 
of around 700 for a parser of Java \cite{For02}.

CFGs work hand-in-hand with REs. The \emph{lexical analysis}, breaking up the input 
into tokens, is performed with REs. Such tokens are subject to \emph{syntactical analysis},
which is executed with CFGs. This split into two phases is not necessary with PEGs, 
as they make it possible to easily express both lexical and syntactical rules with
a single formalism. We will see that in the following example.


\begin{exa}[PEG for simple mathematical expressions]\label{math_peg}
 Consider a PEG for simple mathematical expressions over 5 non-terminals:
  $\NT ::= \{\NTws, \NTnumber, \NTterm,$ $\NTfactor, \NTexpr\}$
with the following productions ($\PD$ function from Definition~\ref{peg}):
\[
\begin{array}{r @{\ \ ::=\ \ } l}
\NTws     & \prep{(
               \term{{\text{\textvisiblespace }}} 
            \choiceOPs
                \term{{\backslash t}}
            )} \\
\NTnumber & \pplus{\prange{0}{9}} \\
\NTterm   & \NTws \xx \NTnumber \xx \NTws \choiceOPs
            \NTws \xx \term{{(}} \xx \NTexpr \xx \term{{)}} \xx \NTws \\
\NTfactor & \NTterm \xx \term{{*}} \xx \NTfactor \choiceOPs \NTterm \\
\NTexpr   & \NTfactor \xx \term{{+}} \xx \NTexpr \choiceOPs \NTfactor \\
\end{array}
\]
Please note that in this and all the following examples we write the
sequence operator $\pseq{e_1}{e_2}$ implicitly as $e_1\ e_2$. The starting 
production is $\PS ::= \NTexpr$.

First, let us note that lexical analysis is incorporated into this grammar by means
of the $\NTws$ production which consumes all white-space from the beginning of the input.
Allowing white-space between ``tokens'' of the grammar comes down to placing 
the call to this production around the terminals of the grammar. If one does not 
like to clutter the grammar with those additional calls then a simple solution is
to re-factor all terminals into separate productions, which consume not only the terminal
itself but also all white-space around it.

Another important observation is that we made addition (and also multiplication) right-associative.
If we were to make it, as usual, left-associative, by replacing the rule 
for $\NTexpr$ with:

\[
 \NTexpr \ ::= \ \NTexpr \xx \term{{+}} \xx \NTfactor \choiceOPs \NTfactor
\]

\noindent then we get a grammar that is left-recursive. Left-recursion (also indirect or mutual)
is problematic as it leads to non-terminating parsers. We will come back to this issue
in Section~\ref{sec:pegs-wf}. 
\exqed
\end{exa}

PEGs can also easily deal with some common idioms often encountered in practical 
grammars of programming languages, which pose a lot of difficulty for CFGs, such
as modular way of handling reserved words of a language and a ``dangling'' else problem 
--- we present them on two examples and refer for more details to Ford~\cite[Chapter 2.4]{For02mth}.

\begin{exa}[Reserved words]
One of the difficulties in tokenization is that virtually every programming 
language has a list of \emph{reserved words}, which should not be accepted 
as identifiers. PEGs allow an elegant pattern to deal with this problem:
\[
\begin{array}{r @{\ \ ::=\ \ } l}
\NTident    & \pnot{\NTreserved} \xx \pplus{\NTletter} \xx \NTws \\
\NTreserved & \NTif \xx \choiceOPs \ldots \\
\NTif       & \lit{if} \xx \pnot{\NTletter} \xx \NTws
\end{array}
\]
The rule $\NTident$ for identifiers reads a non-empty list of letters 
but only after checking, with the not-predicate, that there is no 
reserved word at this position. The rules for the reserved words 
ensure that it is not followed by a letter (``$\operatorname{ifs}$''
is a valid identifier) and consume all the following white space.
In this example we only presented a single reserved word ``$\operatorname{if}$''
but adding a new word requires only adding a rule similar to $\NTif$ 
and extending the choice in $\NTreserved$.
\exqed
\end{exa}

\begin{exa}[``Dangling'' else]
Consider the following part of a CFG for the C language:
\[
\begin{array}{r @{\ \ ::=\ \ } l}
\NTstmt                  & \NCif \xx \NClp \xx \NTexpr \xx \NCrp \xx \NTstmt \\
\multicolumn{1}{r}{\mid} & \NCif \xx \NClp \xx \NTexpr \xx \NCrp \xx \NTstmt \xx \NCelse \xx \NTstmt \\
\multicolumn{1}{r}{\mid} & \ldots
\end{array}
\]
According to this grammar there are two possible readings of a statement
\[
  \mathrm{if} \xx (e_1) \xx \mathrm{if} \xx (e_2) \xx s_1 \xx \mathrm{else} \xx s_2
\]
as the ``$\mathrm{else} \xx s_2$'' branch can be associated either with the 
outer or the inner $\mathrm{if}$. The desired way to resolve this 
ambiguity is usually to bind this $\mathrm{else}$ to the innermost construct.
This is exactly the behavior that we get by converting this CFG to a PEG
by replacing the symmetrical choice operator ``$\mid$'' of CFGs with the
prioritized choice of PEGs ``$\choiceOP$''.
\exqed
\end{exa}

\section{Extending PEGs with Semantic Actions}\label{sec:pegs-actions}

\subsection{XPEGs: Extended PEGs}

  In the previous section we introduced parsing expressions, which can be
used to specify which strings belong to the grammar under consideration.
However the role of a parser is not merely to recognize whether an input is
correct or not but also, given a correct input, to compute its representation
in some structured form. This is typically done by extending grammar expressions
with \emph{semantic values}, which are a representation of the result of parsing 
this expression on (some) input and by extending a grammar with \emph{semantic
actions}, which are functions used to produce and manipulate the semantic 
values. Typically a semantic value associated with an expression will be
its parse tree so that parsing a correct input will give a \emph{parse tree} of this
input. For programming languages such parse tree would represent the AST of 
the language.

In order to deal with this extension we will replace the simple type of parsing 
expressions $\pex$ with a family of types $\Pex{\alpha}$, where the index $\alpha$
is a type of the semantic value associated with the expression.
We also compositionally define default semantic values for all types of 
expressions and introduce a new construct: coercion, $\pcoerce{e}{f}$, which 
converts a semantic value $v$ associated with $e$ to $f(v)$. 

Borrowing notations from \coq\ we will use the following types:
\begin{iteMize}{$\bullet$}
  \item $\Type$ is the universe of types.
  \item $\TrueT$ is the singleton type with a single value $I$.
  \item $\Char$ is the type of machine characters. It corresponds to the type of 
    terminals $\T$, which in concrete parsers will always be instantiated
    to $\Char$.
  \item $\List{\alpha}$ is the type of lists of elements of $\alpha$ for any 
    type $\alpha$. Also $\StringT ::= \List{\Char}$.
  \item $\alpha_1 * \ldots * \alpha_n$ is the type of $n$-tuples of elements $(a_1, \ldots, a_n)$ with 
    $a_1 \in \alpha_1, \ldots, a_n \in \alpha_n$ for any types $\alpha_1, \ldots, \alpha_n$.
    If $v$ is an $n$-tuple then $v_i$ is its $i$'th projection.
  \item $\Option{\alpha}$ is the type optionally holding a value of type $\alpha$,
    with two constructors $\NoneT$ and $\SomeT{v}$ with $\typei{v}{\alpha}$.
\end{iteMize}

\begin{defi}[Parsing expressions with semantic values]\label{PExp}
We introduce a set of \emph{parsing expressions with semantic values}, $\Pex{\alpha}$, 
as an inductive family indexed by the type $\alpha$ of semantic values of
an expression. The typing rules for $\Pex{\alpha}$ are given in 
Figure~\ref{peg-exp-prod-fig}. 
\defqed
\end{defi}

Note that for the choice operator $\pchoice{e_1}{e_2}$ the types of semantic values 
of $e_1$ and $e_2$ must match, which will sometimes require use of the coercion 
operator $\pcoerce{e}{f}$.

Let us again see the derived operators and their types, as we need to insert
a few coercions:
\[
\begin{array}{r @{\ :\ } l @{\ \ ::=\ \ } l}
\prange{a}{z} & \Pex{\Char}           & \term{a} \choiceOPs \ldots \choiceOPs \term{z} \\
\lit{s}       & \Pex{\StringT}        & \pcoerces{\term{s_0} \seqOPs \ldots \seqOPs \term{s_n}}{\operatorname{tuple2str}} \\
\pplus{e}     & \Pex{\List{\alpha}}   & \pcoercef{\pseq{e}{\prep{e}}}{\cons{x_1}{x_2}} \\
\popt{e}      & \Pex{\Option{\alpha}} & \pcoercef{\makebox[0.2cm][l]{$e$}}{\SomeT{x}} \\
\multicolumn{2}{r}{\choiceOP}         & \pcoercef{\makebox[0.2cm][l]{$\emptye$}}{\NoneT} \\
\pand{e}      & \Pex{\TrueT}          & \pnot{\pnot{e}}
\end{array}
\]
where $\operatorname{tuple2str}(c_1, \ldots, c_n) = [c_1; \ldots; c_n]$.

\begin{figure}[t!]
\begin{center}
\[
\begin{array}{ccc}
\inferrule{ }{\typei{\emptye}{\Pex{\TrueT}}} &
\inferrule{ }{\typei{\anychar}{\Pex{\Char}}} &
\inferrule{a \in \T}{\typei{\term{a}}{\Pex{\Char}}} \\[15pt]
\inferrule{A \in \NT}{\typei{A}{\Pex{\PT(A)}}} &
\qquad \inferrule{\typei{e_1}{\Pex{\alpha}} \\ \typei{e_2}{\Pex{\beta}}}{\typei{\pseq{e_1}{e_2}}{\Pex{\alpha * \beta}}} &
\qquad \inferrule{\typei{e_1}{\Pex{\alpha}} \\ \typei{e_2}{\Pex{\alpha}}}{\typei{\pchoice{e_1}{e_2}}{\Pex{\alpha}}} \\[15pt]
\inferrule{\typei{e}{\Pex{\alpha}}}{\typei{\prep{e}}{\Pex{\List{\alpha}}}} &
\inferrule{\typei{e}{\Pex{\alpha}}}{\typei{\pnot{e}}{\Pex{\TrueT}}} &
\inferrule{\typei{e}{\Pex{\alpha}} \\ \typei{f}{\alpha \to \beta}}{\typei{\pcoerce{e}{f}}{\Pex{\beta}}}
\end{array}
\]
\end{center}
\caption{Typing rules for parsing expressions with semantic actions}
\label{peg-exp-prod-fig}
\end{figure}

  The definition of an extended parsing expression grammar (XPEG) is 
as expected (compare with Definition~\ref{pexp}).

\begin{defi}[Extended Parsing Expressions Grammar (XPEG)]\label{pegs-prod-peg}
  An extended parsing expressions grammar (XPEG), $\GG$, is a tuple
$(\T, \NT, \PT,$ $ \PD, \PS)$, where:
\begin{iteMize}{$\bullet$}
  \item $\T$ is a finite set of terminals,
  \item $\NT$ is a finite set of non-terminals,
  \item $\PT : \NT \to \Type$ is a function that gives types of semantic 
    values of all productions.
  \item $\PD$ is the interpretation of the productions of the grammar, \ie,
    $\PD : \forall_{A : \NT} \Pex{\PT(A)}$ and
  \item $\PS$ is the start production, $\PS \in \NT$. \defqed
\end{iteMize}
\end{defi}

  We extended the semantics of PEGs from Figure~\ref{peg-sem-fig} to semantics
of XPEGs in Figure~\ref{peg-sem-prod-fig}.
 
\newcommand{\rsucc}[1]{#1_{\surd}}
\newcommand{\rfail}[1]{#1_{\bot}}
\newcommand{\repS}{\rsucc{rep}}
\newcommand{\repF}{\rfail{rep}}

\begin{figure}[t!]
\begin{center}
\[
\begin{array}{c@{\ }c@{\ }c}
\inferrule{ }{\pegsemx{\emptye}{s}{1}{\Ok{I}{s}}} &
\inferrule{\pegsemx{\PD(A)}{s}{m}{r}}{\pegsemx{A}{s}{m+1}{r}} &
\inferrule{ }{\pegsemx{\anychar}{\xxs}{1}{\Ok{x}{\xs}}} \\[15pt]
\inferrule{ }{\pegsemx{\anychar}{\emptyList}{1}{\fail}} &
\inferrule{\pegsemx{e_1}{s}{m}{\fail} \\ \pegsemx{e_2}{s}{n}{r}}{\pegsemx{\pchoice{e_1}{e_2}}{s}{m+n+1}{r}} &
\inferrule{\pegsemx{e_1}{s}{m}{\Ok{v}{s'}}}{\pegsemx{\pchoice{e_1}{e_2}}{s}{m+1}{\Ok{v}{s'}}} \\[15pt]
\inferrule{ }{\pegsemx{\term{x}}{\xxs}{1}{\Ok{x}{\xs}}} &
\inferrule{ }{\pegsemx{\term{x}}{\emptyList}{1}{\fail}} &
\inferrule{x \neq y}{\pegsemx{\term{y}}{\xxs}{1}{\fail}} \\[15pt]
\inferrule{\pegsemx{e_1}{s}{m}{\Ok{v_1}{s'}} \\ \pegsemx{e_2}{s'}{n}{\fail}}{\pegsemx{\pseq{e_1}{e_2}}{s}{m+n+1}{\fail}} &
\inferrule{\pegsemx{e_1}{s}{m}{\Ok{v_1}{s'}} \\ \pegsemx{e_2}{s'}{n}{\Ok{v_2}{s''}}}{\pegsemx{\pseq{e_1}{e_2}}{s}{m+n+1}{\Ok{(v_1,v_2)}{s''}}} &
\inferrule{\pegsemx{e_1}{s}{m}{\fail}}{\pegsemx{\pseq{e_1}{e_2}}{s}{m+1}{\fail}} \\[15pt]
\inferrule{\pegsemx{e}{s}{m}{\fail}}{\pegsemx{\prep{e}}{s}{m+1}{\Ok{\emptyList}{s}}} &
\inferrule{\pegsemx{e}{s}{m}{\Ok{v}{s'}} \\ \pegsemx{\prep{e}}{s'}{n}{\Ok{\vs}{s''}}}{\pegsemx{\prep{e}}{s}{m+n+1}{\Ok{\vvs}{s''}}} &
\inferrule{\pegsemx{e}{s}{m}{\fail}}{\pegsemx{\pnot{e}}{s}{m+1}{\Ok{I}{s}}} \\[15pt]
\inferrule{\pegsemx{e}{s}{m}{\Ok{v}{s'}}}{\pegsemx{\pnot{e}}{s}{m+1}{\fail}} &
\inferrule{\pegsemx{e}{s}{m}{\Ok{v}{s'}}}{\pegsemx{\pcoerce{e}{f}}{s}{m+1}{\Ok{f(v)}{s'}}} &
\inferrule{\pegsemx{e}{s}{m}{\fail}}{\pegsemx{\pcoerce{e}{f}}{s}{m+1}{\fail}}
\end{array}
\]
\end{center}
\caption{Formal semantics of XPEGs with semantic actions.}
\label{peg-sem-prod-fig}
\end{figure}

\begin{exa}[Simple mathematical expressions ctd.]\label{math_xpeg}
 Let us extend the grammar from Example~\ref{math_peg} with semantic
actions. The grammar expressed mathematical expressions and 
we attach semantic actions evaluating those expressions, hence obtaining
a very simple calculator.

  It often happens that we want to ignore the semantic value
attached to an expression. This can be accomplished by coercing this
value to $I$, which we will abbreviate by $e\dropResult ::= \pcoercef{e}{I}$.
\[
\begin{array}{r @{\ \ ::=\ \ } l l}
\NTws     & \prep{(
               \term{{\text{\textvisiblespace }}} 
            \choiceOPs
                \term{{\backslash t}}
            )} & \xx\dropResult \\
\NTnumber & \pplus{\prange{0}{9}}          & \pcoerces{}{\digListToNat} \\
\NTterm   & \NTws \xx \NTnumber \xx \NTws  & \pcoercef{}{x_2} \\
\multicolumn{1}{r}{\choiceOPs} & \NTws \xx \term{{(}} \xx \NTexpr \xx \term{{)}} \xx \NTws & \pcoercef{}{x_3} \\
\NTfactor & \NTterm \xx \term{{*}} \xx \NTfactor & \pcoercef{}{x_1 * x_3} \\
\multicolumn{1}{r}{\choiceOPs} & \NTterm \\
\NTexpr   & \NTfactor \xx \term{{+}} \xx \NTexpr & \pcoercef{}{x_1 + x_3} \\
\multicolumn{1}{r}{\choiceOPs} & \NTfactor \\
\end{array}
\]
where $\digListToNat$ converts a list of digits to their decimal representation
and $x_i$ in the productions is the $i$-th projection of the vector of values $x$,
resulting from parsing a sequence.

This grammar will associate, as expected, the semantical value $36$ with
the string ``\texttt{(1+2) * (3 * 4)}''. Of course in practice instead of
evaluating the expression we would usually write semantic actions to build 
a parse tree of the expression for later processing.
\exqed
\end{exa}

\subsection{Meta-properties of (X)PEGs}

Now we will present some results concerning semantics of (X)PEGs. They are all
variants of results obtained by Ford \cite{For04}, only now we extend them to XPEGs.
First we prove that, as expected, the parsing only 
consumes a prefix of a string.

\begin{thm}\label{thm:peg_prefix}
If $\pegsemx{e}{s}{m}{\Ok{v}{s'}}$ then $s'$ is a suffix of $s$.
\end{thm}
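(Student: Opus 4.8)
The plan is to argue by induction on the derivation of the semantic judgement, i.e.\ by rule induction over the inductive definition in Figure~\ref{peg-sem-prod-fig}. The cleanest formulation strengthens the statement to range over arbitrary results: I would prove that for every derivable $\pegsemx{e}{s}{m}{r}$ and all $v, s'$, if $r = \Ok{v}{s'}$ then $s'$ is a suffix of $s$. For the rules whose conclusion is $\fail$ this is vacuously true, and phrasing it this way guarantees that the induction hypothesis is available for every immediate subderivation, whether that subderivation succeeds or fails (this matters, e.g., for the failing first alternative of a choice, or for the failing premise that terminates a repetition).

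The workhorses are two elementary facts about the suffix relation on strings, which I would isolate as lemmas: it is reflexive ($s$ is a suffix of $s$) and transitive, and moreover $\xs$ is a suffix of $\xxs$. With these in hand the case analysis is routine. The axioms that succeed without consuming input --- $\emptye$, the successful $\pnot{e}$, and the empty repetition --- return $s' = s$ and are closed by reflexivity. The single-terminal rules for $\anychar$ and $\term{x}$ on a non-empty input return the tail $\xs$, closed by the tail-suffix fact. For the non-terminal rule and the coercion rule the remaining string is unchanged, so the claim follows immediately from the induction hypothesis on the premise, and both choice rules keep the input string $s$ fixed and read off the result of one alternative, so the induction hypothesis applies directly there as well.

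The only cases with any content are the two genuinely recursive ones. For a successful sequence $\pseq{e_1}{e_2}$, the induction hypothesis on the first premise gives that $s'$ is a suffix of $s$, the induction hypothesis on the second premise gives that $s''$ is a suffix of $s'$, and transitivity yields $s''$ a suffix of $s$; the repetition rule $\prep{e}$ is handled identically, chaining the suffix obtained from $e$ with the one obtained from the recursive $\prep{e}$ premise. I do not expect a real obstacle here: there is no delicate well-foundedness issue, because rule induction already supplies the hypothesis for every subderivation, and the whole argument reduces to reflexivity and transitivity of ``suffix''. In the \coq\ development the bulk of the effort is merely discharging the many syntactic cases, most of which inversion on the derivation together with the two suffix lemmas dispatch automatically.
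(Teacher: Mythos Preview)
Your proposal is correct and follows essentially the same approach as the paper: induction on the derivation, with reflexivity handling the non-consuming cases and transitivity of the suffix relation closing the sequence and repetition cases. Your explicit strengthening to arbitrary results $r$ is a clean way to set up the rule induction, though as you note the failing subderivations are never actually needed for the conclusion, so the paper's terser formulation (inducting directly on the successful derivation) works just as well.
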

\begin{proof}
Induction on the derivation of $\pegsemx{e}{s}{m}{\Ok{v}{s'}}$
using transitivity of the prefix property for 
sequence and repetition cases.
\end{proof}

As mentioned earlier, (X)PEGs are unambiguous:

\begin{thm}\label{thm:peg_unambig}
If $\pegsemx{e}{s}{m_1}{r_1}$ and $\pegsemx{e}{s}{m_2}{r_2}$ then $m_1 = m_2$ and $r_1 = r_2$.
\end{thm}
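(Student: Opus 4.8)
The plan is to prove the statement by induction on the derivation of the first judgment $\pegsemx{e}{s}{m_1}{r_1}$, keeping $m_2$ and $r_2$ universally quantified, and in each case inverting the second derivation $\pegsemx{e}{s}{m_2}{r_2}$. The induction hypothesis then asserts, for every proper sub-derivation occurring in the first derivation, that any derivation of the same pair $(e,s)$ produces the same step count and the same result. Since each premise of each rule in Figure~\ref{peg-sem-prod-fig} is itself a strictly smaller derivation, the hypothesis is available for every sub-parse we meet. (Equivalently one could induct on $m_1$, but structural induction on the inductive relation matches the \coq\ development most directly.)

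The atomic cases ($\emptye$, $\anychar$, $\term{x}$) are immediate: once $e$ is fixed, the shape of $s$ (empty versus $\xxs$) together with the side condition $x \neq y$ for terminals selects exactly one applicable rule, so inverting the second derivation forces the same rule and hence $m_2 = m_1 = 1$ and $r_2 = r_1$. The non-terminal case is equally direct: the sole rule for $A$ has premise $\pegsemx{\PD(A)}{s}{m}{r}$, so the hypothesis applied to this premise gives agreement of the inner step count and result, whence $m_2 = m+1 = m_1$ and $r_2 = r_1$. Coercion $\pcoerce{e}{f}$ is similar, the hypothesis on the premise for $e$ determining whether $e$ succeeds or fails and thus which of the two rules applies; as $f$ is a function, $f(v)$ is fixed by $v$.

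The interesting cases are the operators whose conclusion shape is shared by several rules, namely choice, sequence, repetition and the not-predicate, where the rules are distinguished not by $e$ but by the outcome of an inner parse; the crux is to show that both derivations must select the same rule. For $\pchoice{e_1}{e_2}$, both rules begin by parsing $e_1$ on $s$; the hypothesis applied to the first derivation's $e_1$-premise fixes this outcome as either $\fail$ or $\Ok{v}{s'}$, and since these are mutually exclusive the second derivation is forced onto the matching rule. In the failure branch one then appeals to the hypothesis on the $e_2$-premise (run on the same $s$); in the success branch the single premise already fixes everything; in both, the conclusion step counts, being sums of agreeing premises plus one, coincide. Sequence is handled the same way: the $e_1$-premise is common to all three rules, so the hypothesis first decides whether $e_1$ fails (selecting the $\fail$-first rule) or succeeds with some remainder $s'$; in the latter case $s'$ is uniquely determined, the two $e_2$-premises are run on this same $s'$, and a second appeal to the hypothesis decides between the remaining two rules. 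Repetition $\prep{e}$ is analogous, using the hypothesis on the $e$-premise to choose between the base rule and the recursive rule, and the hypothesis on the strictly smaller recursive premise $\pegsemx{\prep{e}}{s'}{n}{r}$ to finish; the not-predicate $\pnot{e}$ likewise has its two rules separated by whether the $e$-premise succeeds or fails.

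The main obstacle, and the only place needing care, is precisely this rule-selection argument for the non-syntax-directed operators: one cannot read off the last rule from $e$ alone, and must instead invoke the hypothesis on the shared leading sub-parse to learn its outcome \emph{before} the inversion of the second derivation can be completed. Once we record that $\fail$ and $\Ok{v}{s'}$ are distinct and that the hypothesis pins down which occurs, every case collapses to matching the premises on identical inputs and adding the agreeing step counts.
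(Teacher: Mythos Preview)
Your proposal is correct and follows exactly the paper's approach: induction on the first derivation $\pegsemx{e}{s}{m_1}{r_1}$ followed by inversion of the second. The paper's proof is terse (``all cases immediate''), but your elaboration of how the induction hypothesis on the shared leading premise forces both derivations onto the same rule in the choice, sequence, repetition and not-predicate cases is precisely what underlies that remark.
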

\begin{proof}
Induction on the derivation $\pegsemx{e}{s}{m_1}{r_1}$ followed by inversion
of $\pegsemx{e}{s}{m_2}{r_2}$. All cases immediate from the semantics of XPEGs.
\end{proof}

We wrap up this section with a simple property about the repetition operator, 
that we will need later on. It states that the semantics of a repetition 
expression $\prep{e}$ is not defined if $e$ succeeds without consuming any input.

\begin{lem}\label{thm:peg_loop_cond}
If $\pegsemx{e}{s}{m}{\Ok{v}{s}}$ then $\notpegsemx{\prep{e}}{s}{r}$ for all $r$.
\end{lem}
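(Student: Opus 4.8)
The plan is to argue by contradiction: assuming $\prep{e}$ admits \emph{any} derivation on $s$, I will exploit the fact that $e$ consumes nothing to force an infinite descent. Since every judgement in Figure~\ref{peg-sem-prod-fig} carries an explicit step count, I would fix the hypothesis $\pegsemx{e}{s}{m}{\Ok{v}{s}}$ and proceed by strong (well-founded) induction on the step count $k$ of a supposed derivation $\pegsemx{\prep{e}}{s}{k}{r}$, with the induction hypothesis asserting that no derivation of $\prep{e}$ on the \emph{same} string $s$ exists with a strictly smaller step count.

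For the inductive step I would perform inversion on $\pegsemx{\prep{e}}{s}{k}{r}$, which can only be concluded by one of the two repetition rules. The first rule requires a premise in which $e$ \emph{fails} on $s$; but this is ruled out by determinism of the semantics (Theorem~\ref{thm:peg_unambig}), since $e$ on $s$ already yields the success $\Ok{v}{s}$, and a single pair $(e,s)$ cannot derive both a success and $\fail$.

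The second rule has premises $\pegsemx{e}{s}{m'}{\Ok{v'}{s'}}$ and $\pegsemx{\prep{e}}{s'}{n}{\Ok{\vs}{s''}}$ with $k = m' + n + 1$. Here I would again invoke Theorem~\ref{thm:peg_unambig}: comparing $\pegsemx{e}{s}{m}{\Ok{v}{s}}$ with $\pegsemx{e}{s}{m'}{\Ok{v'}{s'}}$ forces $s' = s$. Consequently the recursive premise is really $\pegsemx{\prep{e}}{s}{n}{\Ok{\vs}{s''}}$, a derivation of $\prep{e}$ on the identical string $s$ with $n < k$, which contradicts the induction hypothesis. With both rules excluded, no $k$ can yield a derivation, so $\notpegsemx{\prep{e}}{s}{r}$ holds for every $r$.

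The only real work is in the second case, and the hard part is precisely the appeal to unambiguity that pins the remaining string $s'$ back to $s$: this is what turns the recursive premise into a strictly smaller derivation on the \emph{same} input and makes the well-founded descent go through. Without determinism one could not exclude that $e$ consumes input on some alternative derivation, and the argument would break down; the remaining steps are routine inversion on the two repetition rules.
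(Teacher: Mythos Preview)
Your proof is correct, but it takes a different route from the paper's. You argue by well-founded induction on the step count $k$ of a putative derivation of $\prep{e}$ on $s$: inversion gives two cases, the base case is excluded because $e$ cannot both succeed and fail (unambiguity on results), and the inductive case yields, again via unambiguity, a strictly smaller derivation of $\prep{e}$ on the same $s$, contradicting the induction hypothesis.

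The paper avoids induction altogether by exploiting the \emph{step-count} component of Theorem~\ref{thm:peg_unambig} rather than the result component. Assuming some derivation $\pegsemx{\prep{e}}{s}{n}{\Ok{\vs}{s'}}$ (repetition never fails, so only this form is possible), the paper simply \emph{builds a strictly larger} derivation: plugging the hypothesis $\pegsemx{e}{s}{m}{\Ok{v}{s}}$ together with the assumed derivation into the inductive repetition rule gives $\pegsemx{\prep{e}}{s}{m+n+1}{\Ok{\vvs}{s'}}$. Now two derivations of $(\prep{e},s)$ exist with step counts $n$ and $m+n+1$, contradicting unambiguity directly. So where you descend, the paper ascends; your argument is the natural one to mechanise by inversion in a proof assistant, while the paper's is a one-line pen-and-paper trick that sidesteps both the induction and the case split.
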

\begin{proof}
Assume $\pegsemx{e}{s}{m}{\Ok{v}{s}}$ and $\pegsemx{\prep{e}}{s}{n}{\Ok{vs}{s'}}$ 
for some $n$, $vs$ and $s'$ (we cannot have $\pegsemx{\prep{e}}{s}{n}{\fail}$ as $\prep{e}$ 
never fails). By the first rule for repetition $\pegsemx{\prep{e}}{s}{m+n+1}{\Ok{v::vs}{s'}}$,
which contradicts the second assumption by Theorem~\ref{thm:peg_unambig}.
\end{proof}

\section{Well-formedness of PEGs}\label{sec:pegs-wf}

We want to guarantee \emph{total correctness} for generated parsers,
meaning they must be \emph{correct} (with respect to PEGs semantics)
and \emph{terminating}. In this section we focus on the latter problem.
Throughout this section we assume a fixed PEG $\GG$.

\subsection{Termination problem for XPEGs}

Ensuring termination of a PEG parser essentially comes down to
two problems:
\begin{iteMize}{$\bullet$}
  \item termination of all semantic actions in $\GG$ and
  \item completeness of $\GG$ with respect to PEGs semantics.
\end{iteMize}

As for the first problem it means that all $f$ functions used
in coercion operators $\pcoerce{e}{f}$ in $\GG$, must be terminating.
We are going to express PEGs completely in \coq\ (more on that in 
Section~\ref{sec:pegs-int}) so for our application we get this property 
for free, as all \coq\ functions are total (hence terminating).

Concerning the latter problem, we must ensure that the grammar $\GG$
under consideration is \emph{complete}, \ie, it either succeeds 
or fails on all input strings. The only potential source of incompleteness 
of $\GG$ is (mutual) \emph{left-recursion} in the grammar. 

We already hinted at this problem in Example~\ref{math_peg} with the rule:
\[
 \NTexpr \ ::= \ \NTexpr \xx \term{{+}} \xx \NTfactor \choiceOPs \NTfactor
\]
Recursive descent parsing of expressions with this rule would start with 
recursively calling a function to parse expression on the same input, 
obviously leading to an infinite loop. But not only direct left recursion 
must be avoided. In the following rule:
\[
 \NTid{A} \ ::= \ \NTid{B} \choiceOPs \NTid{C} \xx \pnot{\:\!\NTid{D}} \xx \NTid{A}
\]
a similar problem occurs provided that $\NTid{B}$ may fail and $\NTid{C}$ 
and $\NTid{D}$ may succeed, the former without consuming any input.

While some techniques to deal with left-recursive PEGs have been developed 
recently \cite{WarEA08}, we choose to simply reject such grammars. 
In general it is undecidable whether a PEG grammar is complete, as it is 
undecidable whether the language generated by $\GG$ is empty \cite{For04}. 

While in general checking grammar completeness is undecidable, 
we follow Ford~\cite{For04} to develop
a simple syntactical check for \emph{well-formedness} of a grammar, which
implies its completeness. This check will reject left-recursive grammars
even if the part with left-recursion is unreachable in the grammar, 
but from a practical point of view this is hardly a limitation.

\subsection{PEG analysis}\label{sec:pegs-wf-props}

We define the \emph{expression set} of $\GG$ as:
\[
 \ES(\GG) = \set{e' \mid e' \subexp e, e \in \PD(A), A \in \NT}
\]
where $\subexp$ is a (non-strict) sub-expression relation on parsing expressions.

\begin{figure}[t!]
\begin{center}
\[
\begin{array}{ccccccc}
\inferrule{ }{\propE{\emptye}}   &
\quad \inferrule{ }{\propNE{\anychar}} &
\quad \inferrule{ }{\propF{\anychar}}  &
\quad \inferrule{a \in \T}{\propNE{\term{a}}} &
\quad \inferrule{a \in \T}{\propF{\term{a}}}   &
\quad \inferrule{\propF{e}}{\propE{\prep{e}}}   &
\quad \inferrule{\propNE{e}}{\propNE{\prep{e}}}
\end{array}
\]
\[
\begin{array}{c@{\qquad}c}
\inferrule{\star \in \set{\propEsym, \propNEsym, \propFsym} \\ A \in \NT \\ \pegProp{\PD(A)}{\star}}{\pegProp{A}{\star}} &
\inferrule{\propF{e_1} \lor (\propNF{e_1} \land \propF{e_2})}{\propF{\pseq{e_1}{e_2}}} \\[15pt]
\inferrule{(\propNE{e_1} \land \propNF{e_2}) \lor (\propNF{e_1} \land \propNE{e_2})}{\propNE{\pseq{e_1}{e_2}}} &
\inferrule{\propE{e_1} \\ \propE{e_2}}{\propE{\pseq{e_1}{e_2}}} \\[15pt]
\inferrule{\propE{e_1} \lor (\propF{e_1} \land \propE{e_2})}{\propE{\pchoice{e_1}{e_2}}} &
\inferrule{\propF{e_1} \\ \propF{e_2}}{\propF{\pchoice{e_1}{e_2}}} \\[15pt]
\inferrule{\propNE{e_1} \lor (\propF{e_1} \land \propNE{e_2})}{\propNE{\pchoice{e_1}{e_2}}} &
\inferrule{\propF{e}}{\propE{\pnot{e}}}   \qquad\qquad
\inferrule{\propNF{e}}{\propF{\pnot{e}}}  
\end{array}
\]
\end{center}
\caption{Deriving grammar properties.}
\label{peg-prop-fig}
\end{figure}

We define three groups of properties over parsing expressions: 
\begin{iteMize}{$\bullet$}
  \item ``$\propEsym$'': parsing expression can succeed without consuming any input,
  \item ``$\propNEsym$'': parsing expression can succeed after consuming some input and
  \item ``$\propFsym$'': parsing expression can fail.
\end{iteMize}

We will write $\propE{e}$ to indicate that the expression $e$ has property ``$\propEsym$''
(similarly for $\propSNE$ and $\propSF$). We will also write $\propNF{e}$ to denote
$\propE{e} \lor \propNE{e}$. We define inference rules for deriving those properties 
in Figure~\ref{peg-prop-fig}.

We start with empty sets of properties and apply those inference rules 
over $\ES(\GG)$ until reaching a fix-point. 
The existence of the fix-point is ensured by the fact that we extend those
property sets monotonically and they are bounded by the finite set $\ES(\GG)$. We 
summarize the semantics of those properties in the following lemma:

\begin{lem}[\cite{For04}]\label{th:pegProp}
For arbitrary $e \in \pex$ and $s \in \String$:
\begin{iteMize}{$\bullet$}
 \item if $\pegsemx{e}{s}{n}{\ok{s}}$ then $\propE{e}$,
 \item if $\pegsemx{e}{s}{n}{\ok{s'}}$ and $\strlen{s'} < \strlen{s}$ then $\propNE{e}$ and
 \item if $\pegsemx{e}{s}{n}{\fail}$ then $\propF{e}$.
\end{iteMize}
\end{lem}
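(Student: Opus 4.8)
The plan is to prove the three implications \emph{simultaneously} by a single induction on the derivation of the judgement $\pegsemx{e}{s}{n}{r}$, rather than treating them separately: the three conclusions simply distinguish the three shapes that the result $r$ of one and the same derivation can take ($\ok{s}$, $\ok{s'}$ with $\strlen{s'} < \strlen{s}$, and $\fail$). Before entering the induction I would record, via Theorem~\ref{thm:peg_prefix}, that whenever $e$ succeeds with remainder $s'$ the string $s'$ is a suffix of $s$, so $\strlen{s'} \le \strlen{s}$ with equality exactly when $s' = s$. This is what makes the first two cases of the statement jointly exhaustive over all successful results, and it lets me replace the semantic notion of ``consuming input'' by the purely combinatorial test $\strlen{s'} < \strlen{s}$, which is the form in which the property rules of Figure~\ref{peg-prop-fig} are phrased.

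The base cases ($\emptye$, $\anychar$, $\term{x}$, over empty or non-empty input) are immediate, since each semantics axiom is mirrored by exactly one axiom in Figure~\ref{peg-prop-fig}: e.g.\ $\emptye$ succeeds with $s' = s$ and $\propE{\emptye}$ is an axiom, whereas $\term{x}$ on $\xxs$ consumes and $\propNE{\term{x}}$ is derivable. For the non-terminal rule, the induction hypothesis applied to the sub-derivation of $\PD(A)$ yields the relevant property of $\PD(A)$, which the rule $\pegProp{\PD(A)}{\star} \Rightarrow \pegProp{A}{\star}$ (with $\star$ the matching tag) transports to $A$. The structural cases of sequence, choice and the not-predicate I would handle by inverting the applicable semantics rule and then case-splitting on the shapes of the sub-results and, using the suffix fact above, on whether each sub-expression consumed. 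For example, when $\pseq{e_1}{e_2}$ succeeds with overall remainder $s''$ and $\strlen{s''} < \strlen{s}$, the induction hypotheses give $\propNF{e_1}$ and $\propNF{e_2}$, and chaining the length bounds $\strlen{s''} \le \strlen{s'} \le \strlen{s}$ shows the strict drop occurs at one of the two steps, so one of the disjuncts $(\propNE{e_1} \land \propNF{e_2}) \lor (\propNF{e_1} \land \propNE{e_2})$ holds, which is precisely the premise of the $\propNE{\pseq{e_1}{e_2}}$ rule. Every remaining sub-case of sequence, choice and the not-predicate follows this pattern: each semantics rule has been deliberately matched by a property rule whose premise is discharged by the induction hypotheses together with the suffix/length bookkeeping.

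The hard part will be the repetition operator $\prep{e}$, specifically the rule that produces a non-empty match, concluding $\pegsemx{\prep{e}}{s}{m+n+1}{\ok{s''}}$ from $\pegsemx{e}{s}{m}{\ok{s'}}$ and $\pegsemx{\prep{e}}{s'}{n}{\ok{s''}}$. To reach the right property rule I need to know that the first iteration actually consumed, i.e.\ $\strlen{s'} < \strlen{s}$, and this does \emph{not} follow from the semantics rule in isolation. The key is Lemma~\ref{thm:peg_loop_cond}: were $s' = s$, then $e$ would succeed on $s$ without consuming and $\prep{e}$ would admit no derivation on $s$ at all, contradicting the very derivation being inverted. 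With $\strlen{s'} < \strlen{s}$ in hand the induction hypothesis gives $\propNE{e}$, hence $\propNE{\prep{e}}$; the same contradiction rules out $s'' = s$ under this rule, so the $\propE{\prep{e}}$ obligation need only be met for the other repetition rule, where $e$ fails, yielding $\propF{e}$ and therefore $\propE{\prep{e}}$. Modulo this single appeal to the loop-condition lemma, the whole argument reduces to matching each semantics rule against its counterpart in Figure~\ref{peg-prop-fig}.
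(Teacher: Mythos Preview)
Your proposal is correct and follows essentially the same approach as the paper: both proofs proceed by induction (the paper phrases it as induction over the step count $n$, you as induction on the derivation, which amount to the same thing here since every premise carries a strictly smaller count), and both single out the inductive repetition rule as the one non-routine case, discharging it via Lemma~\ref{thm:peg_loop_cond}. Your write-up simply makes explicit the suffix/length bookkeeping and the case splits that the paper leaves implicit under ``all cases easy.''
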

\begin{proof}
Induction over $n$. All cases easy by the induction hypothesis
and semantical rules of XPEGs, except for $\prep{e}$ which requires
use of Lemma~\ref{thm:peg_loop_cond}.
\end{proof}

Those properties will be used for establishing well-formedness of a PEG, as we will see
in the following section. It is worth noting here that checking whether
$\propE{e}$ also plays a crucial role in the formal approach to parsing developed by
Danielsson~\cite{Dan10} (we will say more about his work in Section~\ref{sec:relwork}).

It is also interesting to consider such a simplified analysis in our setting, \ie,
only considering $\propE{e}$ and collapsing derivations of Figure~\ref{peg-prop-fig}
by assuming $\propNE{e}$ and $\propF{e}$ hold for every expression $e$. At first
it seems we would lose some precision by such an over-approximation as for instance
that would lead us to conclude $\propE{\pnot{\emptye}}$, whereas in fact this
expression can never succeed without consuming any input (as, quite simply, it
can \emph{never} succeed). As we will see soon this would lead us to reject
a valid definition:
\[
 \NTid{A} \ ::= \ \pnot{\emptye}\, ;\, \NTid{A}
\]
However, this definition of $\NTid{A}$ is not very interesting as it always fails.
In fact, we conjecture that the differences occur only in such degenerated cases
and that in practice such a simplified analysis would be as efficient as that
of~\cite{For04}.

\subsection{PEG well-formedness}\label{sec:pegs-wf-wf}

Using the semantics of those properties of parsing expression we can perform the 
completeness analysis of $\GG$. We introduce a set of well-formed expressions
$\WFset$ and again iterate from an empty set by using derivation rules from
Figure~\ref{peg-wf-fig} over $\ES(\GG)$ until reaching a fix-point. 

\begin{figure}[t!]
\begin{center}
\[
\begin{array}{c@{\quad}c@{\quad}c}
\inferrule{A \in \NT \\ \WF{\PD(A)}}{\WF{A}} &
\inferrule{ }{\WF{\emptye}} \qquad
\inferrule{ }{\WF{\anychar}} &
\inferrule{a \in \T}{\WF{\term{a}}} \qquad
\inferrule{\WF{e}}{\WF{\pnot{e}}} \\[15pt]
\inferrule{\WF{e_1} \\ \propE{e_1} \imp \WF{e_2}}{\WF{\pseq{e_1}{e_2}}} &
\inferrule{\WF{e_1} \\ \WF{e_2}}{\WF{\pchoice{e_1}{e_2}}} &
\inferrule{\WF{e},\quad\notPropE{e}}{\WF{\prep{e}}}
\end{array}
\]
\end{center}
\caption{Deriving the well-formedness property for a PEG.}
\label{peg-wf-fig}
\end{figure}

We say that $\GG$ is well-formed if $\ES(\GG) = \WFset$. We have the following result:

\newcommand{\IHin}{\ensuremath{\mathid{IH}_{\mathid{in}}}}
\newcommand{\IHout}{\ensuremath{\mathid{IH}_{\mathid{out}}}}
\begin{thm}[\cite{For04}]\label{th:pegs-wf-total}
If $\GG$ is well-formed then it is complete.
\end{thm}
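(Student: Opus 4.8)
The goal is to show that well-formedness forces termination, which here means: for every $e \in \ES(\GG)$ and every $s \in \String$ there exist $m$ and $r$ with $\pegsemx{e}{s}{m}{r}$. A derivation of such a judgement is precisely a witness that parsing $e$ on $s$ runs to completion and returns either a success $\Ok{v}{s'}$ or $\fail$, so establishing this existence is exactly what completeness asks for. Note that $\ES(\GG)$ is closed both under taking subexpressions and under $A \mapsto \PD(A)$, so parsing never leaves $\ES(\GG)$ and it suffices to range over this set. The plan is to run a single well-founded induction whose measure combines how much input is left with how deeply the current expression sits inside the well-formedness analysis.

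Because $\GG$ is well-formed we have $\ES(\GG) = \WFset$, so every expression we meet carries a \emph{finite} derivation of its well-formedness built from the rules of Figure~\ref{peg-wf-fig}; write $h(e)$ for the height of this derivation. The key observation is that in each of those rules the premises refer to expressions of strictly smaller height: $h(\PD(A)) < h(A)$, and $h(e_1), h(e_2) < h(\pchoice{e_1}{e_2})$, and $h(e) < h(\prep{e})$, and $h(e) < h(\pnot{e})$; for sequences $h(e_1) < h(\pseq{e_1}{e_2})$ always holds, while $h(e_2) < h(\pseq{e_1}{e_2})$ holds \emph{whenever} $\propE{e_1}$ — this is exactly the content of the conditional premise $\propE{e_1} \imp \WF{e_2}$. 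I would then prove the statement by well-founded induction on the lexicographically ordered pair $(\strlen{s}, h(e))$, with a case distinction on the shape of $e$.

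The atomic cases $\emptye$, $\anychar$, $\term{a}$ produce a result immediately by inspecting whether $s$ is empty (and, for $\term{a}$, its first character). For a non-terminal $A$ I apply the induction hypothesis to $\PD(A)$ on the same input $s$: the string is unchanged but $h(\PD(A)) < h(A)$, so the hypothesis supplies a result for $\PD(A)$, which the non-terminal rule lifts to $A$. The cases $\pchoice{e_1}{e_2}$ and $\pnot{e}$ likewise recurse on the same input with strictly smaller height and then read off the result from the semantics. The two genuinely interesting cases are sequence and repetition, where the input-length component of the measure must do the work. For $\pseq{e_1}{e_2}$ I first apply the hypothesis to $e_1$ on $s$ (same string, smaller height); if $e_1$ fails so does the sequence, and if it yields $\Ok{v_1}{s'}$ then $s'$ is a suffix of $s$ by Theorem~\ref{thm:peg_prefix}. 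When $\strlen{s'} < \strlen{s}$ I recurse on $e_2$ on $s'$, which is smaller because the input shrank; when $\strlen{s'} = \strlen{s}$ we must have $s' = s$, so $e_1$ succeeded without consuming input and Lemma~\ref{th:pegProp} gives $\propE{e_1}$, unlocking the conditional premise and hence $h(e_2) < h(\pseq{e_1}{e_2})$, which justifies the recursion on $e_2$ on $s$. For $\prep{e}$ I recurse on $e$ on $s$; if $e$ fails the repetition returns $\Ok{\emptyList}{s}$, and if it yields $\Ok{v}{s'}$ then the well-formedness premise $\notPropE{e}$, together with Theorem~\ref{thm:peg_prefix} and Lemma~\ref{th:pegProp}, forces $\strlen{s'} < \strlen{s}$ (were $s' = s$, Lemma~\ref{th:pegProp} would yield $\propE{e}$, contradicting $\notPropE{e}$), so the tail call on $\prep{e}$ on $s'$ strictly decreases the measure. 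The coercion case $\pcoerce{e}{f}$ is handled exactly like its argument $e$.

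The main obstacle is the non-terminal step $A \to \PD(A)$, which neither shrinks the input nor reduces the syntactic size of the expression, and so threatens to loop. The reason the height measure $h$ is available to break this apparent cycle is precisely that well-formedness forbids (mutual) left recursion: a left-recursive production, such as $\NTexpr ::= \pseq{\NTexpr}{\ldots}$, would require $\WF{\NTexpr}$ to appear among the premises needed to derive $\WF{\NTexpr}$, so it could never acquire a finite well-formedness derivation and would be excluded from $\WFset$. Thus it is exactly the hypothesis $\ES(\GG) = \WFset$ that guarantees $h$ is well-defined on every expression reached during parsing, and the whole argument turns on checking, case by case, that each recursive call of the semantics decreases $(\strlen{s}, h(e))$.
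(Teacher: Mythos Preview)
Your proof is correct and follows essentially the same approach as the paper: the paper argues by an outer induction on $\strlen{s}$ followed by an inner induction on the depth of the well-formedness derivation of $e$, which is precisely your lexicographic well-founded induction on $(\strlen{s}, h(e))$. The case analyses for sequence and repetition, including the appeals to Theorem~\ref{thm:peg_prefix} and Lemma~\ref{th:pegProp}, match the paper's treatment.
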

\begin{proof}
We will say that $(e, s)$ is complete iff
\(
  \exists_{n, r}\; \pegsemx{e}{s}{n}{r}
\).
So we have to prove that $(e, s)$ is complete for all
$e \in \ES(\GG)$ and all strings $s$. We proceed
by induction over the length of the string $s$ (\IHout), 
followed by induction on the depth of the derivation tree of $\WF{e}$ 
(\IHin).
So we have to prove correctness of a one step derivation of the
well-formedness property (Figure~\ref{peg-wf-fig}) assuming
that all expressions are total on shorter strings. 
The interesting cases are:
\begin{iteMize}{$\bullet$}
  \item For a sequence $\pseq{e_1}{e_2}$ if $\WF{\pseq{e_1}{e_2}}$ 
    then $\WF{e_1}$, so $(e_1, s)$ is complete by \IHin. 
    If $e_1$ fails then $e_1; e_2$ fails.
    Otherwise $\pegsemx{e_1}{s}{n}{\Ok{v}{s'}}$. If $s = s'$ then
    $\propE{e_1}$ (Lemma~\ref{th:pegProp}) and hence 
    $\WF{e_2}$ and $(e_2, s')$ is complete by \IHin.
    If $s \neq s'$ then $\strlen{s'} < \strlen{s}$ 
    (Theorem~\ref{thm:peg_prefix}) and $(e_2, s')$ is complete by \IHout.
    Either way $(e_2, s')$ is complete and we conclude by semantical
    rules for sequence.
  \item For a repetition $\prep{e}$, $\WF{e}$ gives us completeness
    of $(e, s)$ by \IHin. If $e$ fails then we conclude 
    by the base rule for repetition. Otherwise $\pegsemx{\prep{e}}{s}{n}{s'}$ with
    $\strlen{s'} < \strlen{s}$ as $\notPropE{e}$. Hence we get
    completeness of $(\prep{e}, s')$ by \IHout\ and we conclude
    with the inductive rule for repetition. \qedhere
\end{iteMize}
\end{proof}

\section{Formally Verified XPEG interpreter}\label{sec:pegs-int}

In this Section we will present a \coq\ implementation of a 
parser interpreter. This task consists of formalizing the
theory of the preceding sections and, based on this,
writing an interpreter for well-formed XPEGs along with 
its correctness proofs.
The development is too big to present it in detail here, but 
we will try to comment on its most interesting aspects.

We will describe how PEGs are expressed in \coq\ in 
Section~\ref{sec:trx-spec-peg}, comment on the procedure
for checking their well-formedness in Section~\ref{sec:trx-wf}
and describe the formal development of an XPEG 
interpreter in Section~\ref{sec:trx-int}.

\subsection{Specifying XPEGs in \coq}\label{sec:trx-spec-peg}

XPEGs in \coq\ are a simple reflection of Definition~\ref{pegs-prod-peg}.
They are specified over a finite enumeration of non-terminals 
(corresponding to $\NT$) with their types ($\PT$):

\begin{hscode}\SaveRestoreHook
\column{B}{@{}>{\hspre}l<{\hspost}@{}}%
\column{3}{@{}>{\hspre}l<{\hspost}@{}}%
\column{E}{@{}>{\hspre}l<{\hspost}@{}}%
\>[3]{}\Conid{Parameter}\;prod\mathbin{:}\Conid{Enumeration}.{}\<[E]%
\\
\>[3]{}\Conid{Parameter}\;\Varid{prod\char95 type}\mathbin{:}prod\to \Conid{Type}.{}\<[E]%
\ColumnHook
\end{hscode}\resethooks
Building on that we define:
\begin{iteMize}{$\bullet$}
 \item \ensuremath{\Varid{pexp}}: un-typed parsing expressions, $\pex$, and
 \item \ensuremath{\Conid{PExp}}: their typed variant, $\Pex{\alpha}$, which follows the typing 
   discipline from Figure~\ref{peg-exp-prod-fig}.
\end{iteMize}
We present both definitions side by side:

\begin{minipage}[b]{0.3\linewidth}
\begin{hscode}\SaveRestoreHook
\column{B}{@{}>{\hspre}l<{\hspost}@{}}%
\column{E}{@{}>{\hspre}l<{\hspost}@{}}%
\>[B]{}\ensuremath{\mathbf{Inductive}}\;\Varid{pexp}\mathbin{:}\Conid{Type}\mathbin{:=}{}\<[E]%
\\
\>[B]{}\mid \Varid{empty}{}\<[E]%
\\
\>[B]{}\mid \Varid{anyChar}{}\<[E]%
\\
\>[B]{}\mid \Varid{terminal}\;(\Varid{a}\mathbin{:}\Varid{char}){}\<[E]%
\\
\>[B]{}\mid \Varid{range}\;(\Varid{a}\;\Varid{z}\mathbin{:}\Varid{char}){}\<[E]%
\\
\>[B]{}\mid \Varid{nonTerminal}\;(\Varid{p}\mathbin{:}prod){}\<[E]%
\\
\>[B]{}\mid \Varid{seq}\;(\Varid{e1}\;\Varid{e2}\mathbin{:}\Varid{pexp}){}\<[E]%
\\
\>[B]{}\mid \Varid{choice}\;(\Varid{e1}\;\Varid{e2}\mathbin{:}\Varid{pexp}){}\<[E]%
\\
\>[B]{}\mid \Varid{star}\;(\Varid{e}\mathbin{:}\Varid{pexp}){}\<[E]%
\\
\>[B]{}\mid not\;(\Varid{e}\mathbin{:}\Varid{pexp}){}\<[E]%
\\
\>[B]{}\mid \Varid{id}\;(\Varid{e}\mathbin{:}\Varid{pexp}).{}\<[E]%
\ColumnHook
\end{hscode}\resethooks
\end{minipage}
\hspace{0.5cm}
\begin{minipage}[b]{0.7\linewidth}
\begin{hscode}\SaveRestoreHook
\column{B}{@{}>{\hspre}l<{\hspost}@{}}%
\column{E}{@{}>{\hspre}l<{\hspost}@{}}%
\>[B]{}\ensuremath{\mathbf{Inductive}}\;\Conid{PExp}\mathbin{:}\Conid{Type}\to \Conid{Type}\mathbin{:=}{}\<[E]%
\\
\>[B]{}\mid \Conid{Empty}\mathbin{:}\Conid{PExp}\;\Conid{True}{}\<[E]%
\\
\>[B]{}\mid \Conid{AnyChar}\mathbin{:}\Conid{PExp}\;\Varid{char}{}\<[E]%
\\
\>[B]{}\mid \Conid{Terminal}\mathbin{:}\Varid{char}\to \Conid{PExp}\;\Varid{char}{}\<[E]%
\\
\>[B]{}\mid \Conid{Range}\mathbin{:}\Varid{char}\mathbin{*}\Varid{char}\to \Conid{PExp}\;\Varid{char}{}\<[E]%
\\
\>[B]{}\mid \Conid{NonTerminal}\mathbin{:}\ensuremath{\forall}\;\Varid{p},\Conid{PExp}\;(\Varid{prod\char95 type}\;\Varid{p}){}\<[E]%
\\
\>[B]{}\mid \Conid{Seq}\mathbin{:}\ensuremath{\forall}\;\Conid{A}\;\Conid{B},\Conid{PExp}\;\Conid{A}\to \Conid{PExp}\;\Conid{B}\to \Conid{PExp}\;(\Conid{A}\mathbin{*}\Conid{B}){}\<[E]%
\\
\>[B]{}\mid \Conid{Choice}\mathbin{:}\ensuremath{\forall}\;\Conid{A},\Conid{PExp}\;\Conid{A}\to \Conid{PExp}\;\Conid{A}\to \Conid{PExp}\;\Conid{A}{}\<[E]%
\\
\>[B]{}\mid \Conid{Star}\mathbin{:}\ensuremath{\forall}\;\Conid{A},\Conid{PExp}\;\Conid{A}\to \Conid{PExp}\;(\Varid{list}\;\Conid{A}){}\<[E]%
\\
\>[B]{}\mid \Conid{Not}\mathbin{:}\ensuremath{\forall}\;\Conid{A},\Conid{PExp}\;\Conid{A}\to \Conid{PExp}\;\Conid{True}{}\<[E]%
\\
\>[B]{}\mid \Conid{Action}\mathbin{:}\ensuremath{\forall}\;\Conid{A}\;\Conid{B},\Conid{PExp}\;\Conid{A}\to (\Conid{A}\to \Conid{B})\to \Conid{PExp}\;\Conid{B}.{}\<[E]%
\ColumnHook
\end{hscode}\resethooks
\end{minipage}

\noindent Those definitions are straight-forward encodings of
Definitions~\ref{pexp} and \ref{PExp}. We implemented the range
operator $\prange{a}{z}$ as a primitive, as in practice it occurs
frequently in parsers and implementing it as a derived operation by a
choice over all the characters in the range is inefficient. That means
that in the formalization we had to extend the semantics of
Figure~\ref{peg-sem-prod-fig} with this operator, in a straightforward
way.

It is worth noting here that \ensuremath{\Conid{PExp}} is \emph{large}, in terms of Coq universe
levels, as its index lives in \ensuremath{\Conid{Type}}. We never work with propositional equality
of types, so the constraints on types used in constructors of \ensuremath{\Conid{PExp}}, come only
from the inductive definition itself. In particular, \ensuremath{\Conid{PExp}} must live at
a higher universe level than any type used in its constructors.

For ``regular'' use of our parsing machinery this should pose no problems.
However, should we want to develop some higher-order grammars (grammars that
upon parsing return another grammar) we would very soon run into Coq's
\ensuremath{\Conid{Universe}\;\Conid{Inconsistency}} problems. In fact higher-order grammars are
not expressible in our framework anyway, due to the use of Coq's module
system. We will return to this issue in Section~\ref{sec:discussion}.

With \ensuremath{\Varid{pexp}} and \ensuremath{\Conid{PExp}} in place we continue by defining, in an obvious way, conversion
functions from one structure to the another.
\begin{hscode}\SaveRestoreHook
\column{B}{@{}>{\hspre}l<{\hspost}@{}}%
\column{3}{@{}>{\hspre}l<{\hspost}@{}}%
\column{E}{@{}>{\hspre}l<{\hspost}@{}}%
\>[3]{}\ensuremath{\mathbf{Fixpoint}}\;\Varid{pexp\char95 project}\;\Conid{T}\;(\Varid{e}\mathbin{:}\Conid{PExp}\;\Conid{T})\mathbin{:}\Varid{pexp}\mathbin{:=}\hspace{0.9mm}\{\mathbin{...}\mskip1.5mu\}{}\<[E]%
\\
\>[3]{}\ensuremath{\mathbf{Fixpoint}}\;\Varid{pexp\char95 promote}\;(\Varid{e}\mathbin{:}\Varid{pexp})\mathbin{:}\Conid{PExp}\;\Conid{True}\mathbin{:=}\hspace{0.9mm}\{\mathbin{...}\mskip1.5mu\}{}\<[E]%
\ColumnHook
\end{hscode}\resethooks
Conversion from \ensuremath{\Conid{PExp}} to \ensuremath{\Varid{pexp}} simply erases types and maps 
\ensuremath{\Conid{Action}}s to dummy constructor \ensuremath{\Varid{id}}. Conversion in the other
direction maps to expressions of a singleton type \ensuremath{\Conid{True}},
inserting, where needed, type coercions using \ensuremath{\Conid{Action}} operator.

To complete the definition of XPEG grammar, Definition~\ref{pegs-prod-peg},
we declare definitions of non-terminals ($\PD$) and the starting 
production ($\PS$) as:\begin{hscode}\SaveRestoreHook
\column{B}{@{}>{\hspre}l<{\hspost}@{}}%
\column{3}{@{}>{\hspre}l<{\hspost}@{}}%
\column{E}{@{}>{\hspre}l<{\hspost}@{}}%
\>[3]{}\Conid{Parameter}\;\Varid{production}\mathbin{:}\ensuremath{\forall}\;\Varid{p}\mathbin{:}prod,\Conid{PExp}\;(\Varid{prod\char95 type}\;\Varid{p}).{}\<[E]%
\\
\>[3]{}\Conid{Parameter}\;\Varid{start}\mathbin{:}prod.{}\<[E]%
\ColumnHook
\end{hscode}\resethooks

There are two observations that we would like to make at this point.
First, by means of the above embedding of XPEGs in \coq,
every such XPEG is well-defined (though not necessarily well-formed).
In particular there can be no calls to undefined non-terminals and the
conformance with the typing discipline from Figure~\ref{peg-exp-prod-fig} 
is taken care of by the type-checker of \coq.

Secondly, thanks to the use of \coq's mechanisms, such as notations 
and coercions, expressing an XPEG in \coq\ is still relatively easy
as we will see in the following example.

\begin{exa}\label{math_coq}
Figure~\ref{math_coq_peg} presents a precise \coq\ rendering of the 
productions of the XPEG grammar from Example~\ref{math_xpeg}. It is
not much more verbose than the original example. 
Each \ensuremath{\Conid{Pi}} function corresponds to $i$'th projection and they
work with arbitrary $n$-tuples thanks to the type-class mechanism.
\exqed
\end{exa}

\begin{figure}[t!]
\begin{center}
\begin{hscode}\SaveRestoreHook
\column{B}{@{}>{\hspre}l<{\hspost}@{}}%
\column{3}{@{}>{\hspre}l<{\hspost}@{}}%
\column{12}{@{}>{\hspre}l<{\hspost}@{}}%
\column{13}{@{}>{\hspre}l<{\hspost}@{}}%
\column{44}{@{}>{\hspre}l<{\hspost}@{}}%
\column{E}{@{}>{\hspre}l<{\hspost}@{}}%
\>[B]{}\ensuremath{\mathbf{Program}}\;\ensuremath{\mathbf{Definition}}\;\Varid{production}\;\Varid{p}\mathbin{:=}{}\<[E]%
\\
\>[B]{}\hsindent{3}{}\<[3]%
\>[3]{}\ensuremath{\mathbf{match}}\;\Varid{p}\;\ensuremath{\mathbf{return}}\;\Conid{PExp}\;(\Varid{prod\char95 type}\;\Varid{p})\;\ensuremath{\mathbf{with}}{}\<[E]%
\\
\>[B]{}\hsindent{3}{}\<[3]%
\>[3]{}\mid \Varid{ws}{}\<[12]%
\>[12]{}\Rightarrow (\text{\tt \char34 ~\char34}\mathbin{/}\text{\tt \char34 \char92 t\char34})\;[\mskip1.5mu \mathbin{*}\mskip1.5mu]\;{}\<[44]%
\>[44]{}[\mskip1.5mu \mathbin{\#}\mskip1.5mu]{}\<[E]%
\\
\>[B]{}\hsindent{3}{}\<[3]%
\>[3]{}\mid \Varid{number}\Rightarrow [\mskip1.5mu \text{\tt \char34 0\char34}\;\!\!-\!-\text{\tt \char34 9\char34}\mskip1.5mu]\;[\mskip1.5mu \mathbin{+}\mskip1.5mu]\;{}\<[44]%
\>[44]{}[\mskip1.5mu \to \mskip1.5mu]\;\Varid{digListToRat}{}\<[E]%
\\
\>[B]{}\hsindent{3}{}\<[3]%
\>[3]{}\mid \Varid{term}{}\<[12]%
\>[12]{}\Rightarrow \Varid{ws};\Varid{number};\Varid{ws}\;{}\<[44]%
\>[44]{}[\mskip1.5mu \to \mskip1.5mu]\;(\ensuremath{\lambda\hspace{-1mm} }\;\Varid{v}\Rightarrow \Conid{P2}\;\Varid{v}){}\<[E]%
\\
\>[12]{}\hsindent{1}{}\<[13]%
\>[13]{}\mathbin{/}\Varid{ws};\text{\tt \char34 (\char34};\Varid{expr};\text{\tt \char34 )\char34};\Varid{ws}\;{}\<[44]%
\>[44]{}[\mskip1.5mu \to \mskip1.5mu]\;(\ensuremath{\lambda\hspace{-1mm} }\;\Varid{v}\Rightarrow \Conid{P3}\;\Varid{v}){}\<[E]%
\\
\>[B]{}\hsindent{3}{}\<[3]%
\>[3]{}\mid \Varid{factor}\Rightarrow \Varid{term};\text{\tt \char34 *\char34};\Varid{factor}\;{}\<[44]%
\>[44]{}[\mskip1.5mu \to \mskip1.5mu]\;(\ensuremath{\lambda\hspace{-1mm} }\;\Varid{v}\Rightarrow \Conid{P1}\;\Varid{v}\mathbin{*}\Conid{P3}\;\Varid{v}){}\<[E]%
\\
\>[3]{}\hsindent{10}{}\<[13]%
\>[13]{}\mathbin{/}\Varid{term}{}\<[E]%
\\
\>[B]{}\hsindent{3}{}\<[3]%
\>[3]{}\mid \Varid{expr}{}\<[12]%
\>[12]{}\Rightarrow \Varid{factor};\text{\tt \char34 +\char34};\Varid{expr}\;{}\<[44]%
\>[44]{}[\mskip1.5mu \to \mskip1.5mu]\;(\ensuremath{\lambda\hspace{-1mm} }\;\Varid{v}\Rightarrow \Conid{P1}\;\Varid{v}\mathbin{+}\Conid{P3}\;\Varid{v}){}\<[E]%
\\
\>[12]{}\hsindent{1}{}\<[13]%
\>[13]{}\mathbin{/}\Varid{factor}{}\<[E]%
\\
\>[B]{}\hsindent{3}{}\<[3]%
\>[3]{}\ensuremath{\mathbf{end}}.{}\<[E]%
\ColumnHook
\end{hscode}\resethooks
\end{center}
\caption{A \coq\ version of the XPEG for mathematical expressions from Example~\ref{math_xpeg}}
\label{math_coq_peg}
\end{figure}

\subsection{Checking well-formedness of an XPEG}\label{sec:trx-wf}

To check well-formedness of XPEGs we implement the procedure from 
Section~\ref{sec:pegs-wf}. It is worth noting that the
function to compute XPEG properties, by iterating the derivation 
rules of Figure~\ref{peg-prop-fig} until reaching a fix-point, is not structurally 
recursive. Similarly for the well-formedness check with rules from
Figure~\ref{peg-wf-fig}. Fortunately the Program feature~\cite{Soz07} of \coq\
makes specifying such functions much easier. We illustrate it on
the well-formedness check (computing properties is analogous).

We begin by one-step well-formedness derivation corresponding
to Figure~\ref{peg-wf-fig}.\begin{hscode}\SaveRestoreHook
\column{B}{@{}>{\hspre}l<{\hspost}@{}}%
\column{3}{@{}>{\hspre}l<{\hspost}@{}}%
\column{5}{@{}>{\hspre}l<{\hspost}@{}}%
\column{E}{@{}>{\hspre}l<{\hspost}@{}}%
\>[3]{}\ensuremath{\mathbf{Definition}}\;\Varid{wf\char95 analyse}\;(\Varid{exp}\mathbin{:}\Varid{pexp})\;(\Varid{wf}\mathbin{:}\Varid{\Conid{PES}.t})\mathbin{:}\Varid{bool}\mathbin{:=}{}\<[E]%
\\
\>[3]{}\hsindent{2}{}\<[5]%
\>[5]{}\ensuremath{\mathbf{match}}\;\Varid{exp}\;\ensuremath{\mathbf{with}}{}\<[E]%
\\
\>[3]{}\hsindent{2}{}\<[5]%
\>[5]{}\mid \Varid{empty}\Rightarrow \Varid{true}{}\<[E]%
\\
\>[3]{}\hsindent{2}{}\<[5]%
\>[5]{}\mid \Varid{range}\;\anonymous \;\anonymous \Rightarrow \Varid{true}{}\<[E]%
\\
\>[3]{}\hsindent{2}{}\<[5]%
\>[5]{}\mid \Varid{terminal}\;\Varid{a}\Rightarrow \Varid{true}{}\<[E]%
\\
\>[3]{}\hsindent{2}{}\<[5]%
\>[5]{}\mid \Varid{anyChar}\Rightarrow \Varid{true}{}\<[E]%
\\
\>[3]{}\hsindent{2}{}\<[5]%
\>[5]{}\mid \Varid{nonTerminal}\;\Varid{p}\Rightarrow \Varid{is\char95 wf}\;(\Varid{production}\;\Varid{p})\;\Varid{wf}{}\<[E]%
\\
\>[3]{}\hsindent{2}{}\<[5]%
\>[5]{}\mid \Varid{seq}\;\Varid{e1}\;\Varid{e2}\Rightarrow \Varid{is\char95 wf}\;\Varid{e1}\;\Varid{wf}\mathrel{\wedge}(\mathbf{if}\;\Varid{e1}\mathbin{-}[\mskip1.5mu \Varid{gp}\mskip1.5mu]\to \mathrm{0}\;\mathbf{then}\;\Varid{is\char95 wf}\;\Varid{e2}\;\Varid{wf}\;\mathbf{else}\;\Varid{true}){}\<[E]%
\\
\>[3]{}\hsindent{2}{}\<[5]%
\>[5]{}\mid \Varid{choice}\;\Varid{e1}\;\Varid{e2}\Rightarrow \Varid{is\char95 wf}\;\Varid{e1}\;\Varid{wf}\mathrel{\wedge}\Varid{is\char95 wf}\;\Varid{e2}\;\Varid{wf}{}\<[E]%
\\
\>[3]{}\hsindent{2}{}\<[5]%
\>[5]{}\mid \Varid{star}\;\Varid{e}\Rightarrow \Varid{is\char95 wf}\;\Varid{e}\;\Varid{wf}\mathrel{\wedge}(\Varid{negb}\;(\Varid{e}\mathbin{-}[\mskip1.5mu \Varid{gp}\mskip1.5mu]\to \mathrm{0})){}\<[E]%
\\
\>[3]{}\hsindent{2}{}\<[5]%
\>[5]{}\mid not\;\Varid{e}\Rightarrow \Varid{is\char95 wf}\;\Varid{e}\;\Varid{wf}{}\<[E]%
\\
\>[3]{}\hsindent{2}{}\<[5]%
\>[5]{}\mid \Varid{id}\;\Varid{e}\Rightarrow \Varid{is\char95 wf}\;\Varid{e}\;\Varid{wf}{}\<[E]%
\\
\>[3]{}\hsindent{2}{}\<[5]%
\>[5]{}\ensuremath{\mathbf{end}}.{}\<[E]%
\ColumnHook
\end{hscode}\resethooks
This function take a set of well-formed expressions computed so far (\ensuremath{\Conid{PES}}
standing for ``parsing expression set'') and an expression \ensuremath{\Varid{exp}} and
returns true iff \ensuremath{\Varid{exp}} should also be consider well-formed, according
to the derivation system of Figure~\ref{peg-wf-fig}. Here \ensuremath{\Varid{gp}} is the
set of global properties computed following the procedure of
Section~\ref{sec:pegs-wf-props} (again, we do not show the code here,
as that procedure is very analogous to the inference of well-formedness,
that we describe). Hence \ensuremath{\Varid{e}\mathbin{-}[\mskip1.5mu \Varid{gp}\mskip1.5mu]\to \mathrm{0}} should be read as $\propE{e}$ and
\ensuremath{\Varid{is\char95 wf}} is an abbreviation for set membership, \ie:\begin{hscode}\SaveRestoreHook
\column{B}{@{}>{\hspre}l<{\hspost}@{}}%
\column{3}{@{}>{\hspre}l<{\hspost}@{}}%
\column{E}{@{}>{\hspre}l<{\hspost}@{}}%
\>[3]{}\ensuremath{\mathbf{Definition}}\;\Varid{is\char95 wf}\mathbin{:}\Varid{pexp}\to \Varid{\Conid{PES}.t}\to \Varid{bool}\mathbin{:=}\Varid{\Conid{PES}.mem}.{}\<[E]%
\ColumnHook
\end{hscode}\resethooks

With that in place we continue with a simple function that extends the
set of well-formed expressions with the one being considered now, in
case it was established to be well-formed by invocation of \ensuremath{\Varid{wf\char95 analyse}}
and otherwise leaves this set unchanged.\begin{hscode}\SaveRestoreHook
\column{B}{@{}>{\hspre}l<{\hspost}@{}}%
\column{3}{@{}>{\hspre}l<{\hspost}@{}}%
\column{5}{@{}>{\hspre}l<{\hspost}@{}}%
\column{E}{@{}>{\hspre}l<{\hspost}@{}}%
\>[3]{}\ensuremath{\mathbf{Definition}}\;\Varid{wf\char95 analyse\char95 exp}\;(\Varid{exp}\mathbin{:}\Varid{pexp})\;(\Varid{wf}\mathbin{:}\Varid{\Conid{PES}.t})\mathbin{:}\Varid{\Conid{PES}.t}\mathbin{:=}{}\<[E]%
\\
\>[3]{}\hsindent{2}{}\<[5]%
\>[5]{}\mathbf{if}\;\Varid{wf\char95 analyse}\;\Varid{exp}\;\Varid{wf}\;\mathbf{then}\;\Varid{\Conid{PES}.add}\;\Varid{exp}\;\Varid{wf}\;\mathbf{else}\;\Varid{wf}.{}\<[E]%
\ColumnHook
\end{hscode}\resethooks

Now the one step derivation over all expressions $\ES(\GG)$, represented
by the constant \ensuremath{\Varid{grammarExpSet}} below, can be realized as a simple fold
operation using the above function:\begin{hscode}\SaveRestoreHook
\column{B}{@{}>{\hspre}l<{\hspost}@{}}%
\column{3}{@{}>{\hspre}l<{\hspost}@{}}%
\column{5}{@{}>{\hspre}l<{\hspost}@{}}%
\column{E}{@{}>{\hspre}l<{\hspost}@{}}%
\>[3]{}\ensuremath{\mathbf{Definition}}\;\Varid{wf\char95 derive}\;(\Varid{wf}\mathbin{:}\Varid{\Conid{PES}.t})\mathbin{:}\Varid{\Conid{PES}.t}\mathbin{:=}{}\<[E]%
\\
\>[3]{}\hsindent{2}{}\<[5]%
\>[5]{}\Varid{\Conid{PES}.fold}\;\Varid{wf\char95 analyse\char95 exp}\;\Varid{grammarExpSet}\;\Varid{wf}.{}\<[E]%
\ColumnHook
\end{hscode}\resethooks

Now, the complete analysis is a fixpoint of applying one-step derivation
\ensuremath{\Varid{wf\char95 derive}}.
\begin{hscode}\SaveRestoreHook
\column{B}{@{}>{\hspre}l<{\hspost}@{}}%
\column{3}{@{}>{\hspre}l<{\hspost}@{}}%
\column{E}{@{}>{\hspre}l<{\hspost}@{}}%
\>[B]{}\ensuremath{\mathbf{Program}}\;\ensuremath{\mathbf{Fixpoint}}\;\Varid{wf\char95 compute}\;(\Varid{wf}\mathbin{:}\Conid{WFset})\;\hspace{0.9mm}\{\ensuremath{\mathbf{measure}}\;\Varid{wf\char95 measure}\;\Varid{wf}\mskip1.5mu\}\mathbin{:}\Conid{WFset}\mathbin{:=}{}\<[E]%
\\
\>[B]{}\hsindent{3}{}\<[3]%
\>[3]{}\mathbf{let}\;\Varid{wf'}\mathbin{:=}\Varid{wf\char95 derive}\;\Varid{wf}\;\mathbf{in}{}\<[E]%
\\
\>[B]{}\hsindent{3}{}\<[3]%
\>[3]{}\mathbf{if}\;\Varid{\Conid{PES}.equal}\;\Varid{wf}\;\Varid{wf'}\;\mathbf{then}\;\Varid{wf}\;\mathbf{else}\;\Varid{wf\char95 compute}\;\Varid{wf'}.{}\<[E]%
\ColumnHook
\end{hscode}\resethooks
Here \ensuremath{\Conid{WFset}} is a set of well-formed expressions:\begin{hscode}\SaveRestoreHook
\column{B}{@{}>{\hspre}l<{\hspost}@{}}%
\column{3}{@{}>{\hspre}l<{\hspost}@{}}%
\column{E}{@{}>{\hspre}l<{\hspost}@{}}%
\>[3]{}\ensuremath{\mathbf{Definition}}\;\Conid{WFset}\mathbin{:=}\hspace{0.9mm}\{\Varid{e}\mathbin{:}\Varid{\Conid{PES}.t}\mid \Varid{wf\char95 prop}\;\Varid{e}\mskip1.5mu\}{}\<[E]%
\ColumnHook
\end{hscode}\resethooks
where \ensuremath{\Varid{wf\char95 prop}} is a predicate capturing well-formedness of an expression.

The main difficulty here is that \ensuremath{\Varid{wf\char95 compute}} is not structurally recursive.
However, we can construct a measure (into $\nat$) that will decrease
along recursive calls as:
\[
  \ensuremath{\Varid{wf\char95 measure}} ::= \strlen{\ES(\GG)} - \strlen{\ensuremath{\Varid{wf}}}
\]
Now we can prove this procedure terminating, as the set of well-formed
expressions is growing monotonically and is contained in $\ES(\GG)$:
\begin{gather*}
  \ensuremath{\Varid{wf}} \subseteq \ensuremath{\Varid{wf\char95 derive}\;\Varid{wf}} \\
  \ensuremath{\Varid{wf}} \subseteq \ES(\GG) \implies \ensuremath{\Varid{wf\char95 derive}\;\Varid{wf}} \subseteq \ES(\GG) 
\end{gather*}\smallskip

\noindent The Program feature~\cite{Soz07} of \coq, is very helpful in
expressing such non structurally recursive functions, as well as in
general programming with dependent types. The downside of Program is
that it inserts type casts, making reasoning about such functions more
difficult. This can be usually overcome with the use of sigma-types
capturing the function specification (\ensuremath{\Varid{wf\char95
    prop}} in our example) together with its return value. This style
of programming seems to be particularly well suited when working with
Program.

  Finally we obtain the set of well-formed expressions of a grammar by
iterating to a fix-point, starting with an empty set:\begin{hscode}\SaveRestoreHook
\column{B}{@{}>{\hspre}l<{\hspost}@{}}%
\column{3}{@{}>{\hspre}l<{\hspost}@{}}%
\column{E}{@{}>{\hspre}l<{\hspost}@{}}%
\>[3]{}\ensuremath{\mathbf{Program}}\;\ensuremath{\mathbf{Definition}}\;\Conid{WFexps}\mathbin{:}\Varid{\Conid{PES}.t}\mathbin{:=}\Varid{wf\char95 compute}\;\Varid{\Conid{PES}.empty}.{}\<[E]%
\ColumnHook
\end{hscode}\resethooks
a grammar expression \ensuremath{\Varid{exp}} is well-formed if it belongs to this set\begin{hscode}\SaveRestoreHook
\column{B}{@{}>{\hspre}l<{\hspost}@{}}%
\column{3}{@{}>{\hspre}l<{\hspost}@{}}%
\column{E}{@{}>{\hspre}l<{\hspost}@{}}%
\>[3]{}\ensuremath{\mathbf{Definition}}\;\Conid{WF}\;(\Varid{exp}\mathbin{:}\Varid{pexp})\mathbin{:}\Conid{Prop}\mathbin{:=}\Conid{\Conid{PES}.In}\;\Varid{exp}\;\Conid{WFexps}.{}\<[E]%
\ColumnHook
\end{hscode}\resethooks
and a grammar is well-formed if all its expressions are well-formed:\begin{hscode}\SaveRestoreHook
\column{B}{@{}>{\hspre}l<{\hspost}@{}}%
\column{3}{@{}>{\hspre}l<{\hspost}@{}}%
\column{E}{@{}>{\hspre}l<{\hspost}@{}}%
\>[3]{}\ensuremath{\mathbf{Definition}}\;\Varid{grammar\char95 WF}\mathbin{:}\Conid{Prop}\mathbin{:=}\Varid{grammarExpSet}\;[\mskip1.5mu \mathrel{=}\mskip1.5mu]\;\Conid{WFexps}.{}\<[E]%
\ColumnHook
\end{hscode}\resethooks

Above we presented a complete code of the well-formedness analysis
(Section~\ref{sec:pegs-wf-wf}), excluding the inference of properties
(Section~\ref{sec:pegs-wf-props}). Naturally, every of those functions
is accompanied with some lemmas stating its correctness and their
proofs. Those proofs, with Ltac definitions used to discard them,
constitute roughly 4-5x the size of the definitions. This factor
is so low thanks to heavy use of Ltac automation in the proofs;
the proof style advocated by Chlipala~\cite{Chl09}, which we, eventually,
learned to embrace fully.

\bigskip
Our interpreter (more on it in the following section) will work on
XPEGs, not on PEGs. However, the termination analysis sketched above
considers un-typed parsing expressions \ensuremath{\Varid{pexp}}, obtained by projecting
XPEGs expressions (with \ensuremath{\Varid{pexp\char95 project}}). The reason is two-fold. 

Firstly, semantic actions are embedded in Coq's programming language
and hence are terminating and have no influence on the termination
analysis of the grammar. Hence a termination of the parser on
expression \ensuremath{\Varid{e}\mathbin{:}\Conid{PExp}\;\Conid{T}} is immediate from termination of 
\ensuremath{\Varid{pexp\char95 project}\;\Varid{e}\mathbin{:}\Varid{pexp}}.

Secondly, the well-formedness procedure presented above 
needs to maintain a set of parsing expressions (\ensuremath{\Conid{WFset}}) and for that
we need a decidable equality over parsing expressions. Equality over
$\Pex{\alpha}$ is not decidable, as, within coercion operator
$\pcoerce{e}{f}$ they contain arbitrary functions $f$.

An alternative approach would be to consider \ensuremath{\Conid{WFset}} modulo
an equivalence relation on parsing expressions coarser than
the syntactic equality, which would ignore $f$ components
in $\pcoerce{e}{f}$ coercions. That would avoid formalization
of the un-typed structure \ensuremath{\Varid{pexp}} altogether for the price
of reasoning with dependently typed \ensuremath{\Conid{PExp}}'s in the well-formedness
analysis. 

\subsection{A formal interpreter for XPEGs}\label{sec:trx-int}

For the development of a formal interpreter for XPEGs 
we used the \ensuremath{\Varid{ascii}} type of \coq\ for the set of terminals $\T$.
The string type from the standard library of \coq\ is
isomorphic to lists of characters. In its place we just used
a list of characters, in order to be able to re-use a rich set
of available functions over lists.

First let us define the result of parsing an expression \ensuremath{\Conid{PExp}\;\Conid{T}}
on some string:\begin{hscode}\SaveRestoreHook
\column{B}{@{}>{\hspre}l<{\hspost}@{}}%
\column{3}{@{}>{\hspre}l<{\hspost}@{}}%
\column{E}{@{}>{\hspre}l<{\hspost}@{}}%
\>[3]{}\ensuremath{\mathbf{Inductive}}\;\Conid{ParsingResult}\;(\Conid{T}\mathbin{:}\Conid{Type})\mathbin{:}\Conid{Type}\mathbin{:=}{}\<[E]%
\\
\>[3]{}\mid \Conid{PR\char95 fail}.{}\<[E]%
\\
\>[3]{}\mid \Conid{PR\char95 ok}\;(\Varid{s}\mathbin{:}\Varid{string})\;(\Varid{v}\mathbin{:}\Conid{T}){}\<[E]%
\ColumnHook
\end{hscode}\resethooks
\ie, a parsing can either fail (\ensuremath{\Conid{PR\char95 fail}}) or succeed (\ensuremath{\Conid{PR\char95 ok}\;\Varid{s}\;\Varid{v}}),
in which case we obtain a suffix \ensuremath{\Varid{s}} that remains to be parsed and
an associated semantic value \ensuremath{\Varid{v}}.

Now after requiring a well-formed grammar, interpreter can be defined
as a function with the following header:\begin{hscode}\SaveRestoreHook
\column{B}{@{}>{\hspre}l<{\hspost}@{}}%
\column{3}{@{}>{\hspre}l<{\hspost}@{}}%
\column{4}{@{}>{\hspre}l<{\hspost}@{}}%
\column{E}{@{}>{\hspre}l<{\hspost}@{}}%
\>[3]{}\ensuremath{\mathbf{Variable}}\;\Conid{GWF}\mathbin{:}\Varid{grammar\char95 WF}.{}\<[E]%
\\
\>[B]{}\ensuremath{\mathbf{Program}}\;\ensuremath{\mathbf{Fixpoint}}\;\Varid{parse}\;(\Conid{T}\mathbin{:}\Conid{Type})\;(\Varid{e}\mathbin{:}\Conid{PExp}\;\Conid{T}\mid \Varid{is\char95 grammar\char95 exp}\;\Varid{e})\;(\Varid{s}\mathbin{:}\Varid{string})\;{}\<[E]%
\\
\>[B]{}\hsindent{4}{}\<[4]%
\>[4]{}\hspace{0.9mm}\{\ensuremath{\mathbf{measure}}\;(\Varid{e},\Varid{s})\succ\mskip1.5mu\}\mathbin{:}\hspace{0.9mm}\{\Varid{r}\mathbin{:}\Conid{ParsingResult}\;\Conid{T}\mid \ensuremath{\exists }\;\Varid{n},[\mskip1.5mu \Varid{e},\Varid{s}\mskip1.5mu]\Rightarrow [\mskip1.5mu \Varid{n},\Varid{r}\mskip1.5mu]\mskip1.5mu\}{}\<[E]%
\ColumnHook
\end{hscode}\resethooks
So this function takes three arguments (the first one implicit):
\begin{iteMize}{$\bullet$}
 \item \ensuremath{\Conid{T}}: a type of the result of parsing ($\alpha$),
 \item \ensuremath{\Varid{e}}: a parsing expression of type \ensuremath{\Conid{T}} ($\Pex{\alpha}$), 
   with a proof (\ensuremath{\Varid{is\char95 grammar\char95 exp}\;\Varid{e}}) that it belongs to the grammar 
   $\GG$ (which in turn is checked beforehand to be well-formed) and
 \item \ensuremath{\Varid{s}}: a string to be parsed.
\end{iteMize}

The last line in the above header describes the type of the
result of this function, where \ensuremath{[\mskip1.5mu \Varid{e},\Varid{s}\mskip1.5mu]\Rightarrow [\mskip1.5mu \Varid{n},\Varid{r}\mskip1.5mu]} is the expected 
encoding of the semantics from Figure~\ref{peg-sem-prod-fig} and 
corresponds to $\pegsemx{e}{s}{n}{r}$. So the \ensuremath{\Varid{parse}} function produces 
the parsing result $r$ (either $\fail$ or $\Ok{v}{s}$, 
with $\typei{v}{T}$), such that $\pegsemx{e}{s}{n}{r}$ for some $n$,
\ie, it is correct with respect to the semantic of XPEGs.

The body of the \ensuremath{\Varid{parse}} function performs pattern matching on expression 
\ensuremath{\Varid{e}} and interprets it according to the semantics from Figure~\ref{peg-sem-fig}.
We show a simplified (the actual pattern matching is slightly more involved due to
dealing with dependent types) excerpt of this function for a few types of expressions:\begin{hscode}\SaveRestoreHook
\column{B}{@{}>{\hspre}l<{\hspost}@{}}%
\column{4}{@{}>{\hspre}l<{\hspost}@{}}%
\column{8}{@{}>{\hspre}l<{\hspost}@{}}%
\column{10}{@{}>{\hspre}l<{\hspost}@{}}%
\column{16}{@{}>{\hspre}l<{\hspost}@{}}%
\column{E}{@{}>{\hspre}l<{\hspost}@{}}%
\>[4]{}\ensuremath{\mathbf{match}}\;\Varid{e}\;\ensuremath{\mathbf{with}}{}\<[E]%
\\
\>[4]{}\mid \Conid{Empty}\Rightarrow \Conid{Ok}\;\Varid{s}\;\Conid{I}{}\<[E]%
\\
\>[4]{}\mid \Conid{Terminal}\;\Varid{c}\Rightarrow {}\<[E]%
\\
\>[4]{}\hsindent{6}{}\<[10]%
\>[10]{}\ensuremath{\mathbf{match}}\;\Varid{s}\;\ensuremath{\mathbf{with}}{}\<[E]%
\\
\>[4]{}\hsindent{6}{}\<[10]%
\>[10]{}\mid \Varid{nil}\Rightarrow \Conid{Fail}{}\<[E]%
\\
\>[4]{}\hsindent{6}{}\<[10]%
\>[10]{}\mid \Varid{x}\mathbin{::}\Varid{xs}\Rightarrow {}\<[E]%
\\
\>[10]{}\hsindent{6}{}\<[16]%
\>[16]{}\ensuremath{\mathbf{match}}\;\Varid{\Conid{CharAscii}.eq\char95 dec}\;\Varid{c}\;\Varid{x}\;\ensuremath{\mathbf{with}}{}\<[E]%
\\
\>[10]{}\hsindent{6}{}\<[16]%
\>[16]{}\mid \Varid{left}\;\anonymous \Rightarrow \Conid{Ok}\;\Varid{xs}\;\Varid{c}{}\<[E]%
\\
\>[10]{}\hsindent{6}{}\<[16]%
\>[16]{}\mid \Varid{right}\;\anonymous \Rightarrow \Conid{Fail}{}\<[E]%
\\
\>[10]{}\hsindent{6}{}\<[16]%
\>[16]{}\ensuremath{\mathbf{end}}{}\<[E]%
\\
\>[4]{}\hsindent{4}{}\<[8]%
\>[8]{}\ensuremath{\mathbf{end}}{}\<[E]%
\\
\>[4]{}\mid \Conid{NonTerminal}\;\Varid{p}\Rightarrow \Varid{parse}\;(\Varid{production}\;\Varid{p})\;\Varid{s}{}\<[E]%
\\
\>[4]{}\mid \Conid{Choice}\;\anonymous \;\Varid{e1}\;\Varid{e2}\Rightarrow {}\<[E]%
\\
\>[4]{}\hsindent{6}{}\<[10]%
\>[10]{}\ensuremath{\mathbf{match}}\;\Varid{parse}\;\Varid{e1}\;\Varid{s}\;\ensuremath{\mathbf{with}}{}\<[E]%
\\
\>[4]{}\hsindent{6}{}\<[10]%
\>[10]{}\mid \Conid{PR\char95 ok}\;\Varid{s'}\;\Varid{v}\Rightarrow \Conid{Ok}\;\Varid{s'}\;\Varid{v}{}\<[E]%
\\
\>[4]{}\hsindent{6}{}\<[10]%
\>[10]{}\mid \Conid{PR\char95 fail}\Rightarrow \Varid{parse}\;\Varid{e2}\;\Varid{s}{}\<[E]%
\\
\>[4]{}\hsindent{6}{}\<[10]%
\>[10]{}\ensuremath{\mathbf{end}}{}\<[E]%
\\
\>[4]{}\mid \Conid{Star}\;\anonymous \;\Varid{e}\Rightarrow {}\<[E]%
\\
\>[4]{}\hsindent{6}{}\<[10]%
\>[10]{}\ensuremath{\mathbf{match}}\;\Varid{parse}\;\Varid{e}\;\Varid{s}\;\ensuremath{\mathbf{with}}{}\<[E]%
\\
\>[4]{}\hsindent{6}{}\<[10]%
\>[10]{}\mid \Conid{PR\char95 fail}\Rightarrow \Conid{Ok}\;\Varid{s}\;[\mskip1.5mu \mskip1.5mu]{}\<[E]%
\\
\>[4]{}\hsindent{6}{}\<[10]%
\>[10]{}\mid \Conid{PR\char95 ok}\;\Varid{s'}\;\Varid{v}\Rightarrow {}\<[E]%
\\
\>[10]{}\hsindent{6}{}\<[16]%
\>[16]{}\ensuremath{\mathbf{match}}\;\Varid{parse}\;(\Varid{e}\;[\mskip1.5mu \mathbin{*}\mskip1.5mu])\;\Varid{s'}\;\ensuremath{\mathbf{with}}{}\<[E]%
\\
\>[10]{}\hsindent{6}{}\<[16]%
\>[16]{}\mid \Conid{PR\char95 fail}\Rightarrow \mathbin{!}{}\<[E]%
\\
\>[10]{}\hsindent{6}{}\<[16]%
\>[16]{}\mid \Conid{PR\char95 ok}\;\Varid{s''}\;\Varid{v'}\Rightarrow \Conid{Ok}\;\Varid{s''}\;(\Varid{v}\mathbin{::}\Varid{v'}){}\<[E]%
\\
\>[10]{}\hsindent{6}{}\<[16]%
\>[16]{}\ensuremath{\mathbf{end}}{}\<[E]%
\\
\>[4]{}\hsindent{6}{}\<[10]%
\>[10]{}\ensuremath{\mathbf{end}}{}\<[E]%
\\
\>[4]{}\mid \Conid{Not}\;\anonymous \;\Varid{e}\Rightarrow {}\<[E]%
\\
\>[4]{}\hsindent{6}{}\<[10]%
\>[10]{}\ensuremath{\mathbf{match}}\;\Varid{parse}\;\Varid{e}\;\Varid{s}\;\ensuremath{\mathbf{with}}{}\<[E]%
\\
\>[4]{}\hsindent{6}{}\<[10]%
\>[10]{}\mid \Conid{PR\char95 ok}\;\anonymous \;\anonymous \Rightarrow \Conid{Fail}{}\<[E]%
\\
\>[4]{}\hsindent{6}{}\<[10]%
\>[10]{}\mid \Conid{PR\char95 fail}\Rightarrow \Conid{Ok}\;\Varid{s}\;\Conid{I}{}\<[E]%
\\
\>[4]{}\hsindent{6}{}\<[10]%
\>[10]{}\ensuremath{\mathbf{end}}{}\<[E]%
\\
\>[4]{}\mid \Conid{Action}\;\anonymous \;\anonymous \;\Varid{e}\;\Varid{f}\Rightarrow {}\<[E]%
\\
\>[4]{}\hsindent{6}{}\<[10]%
\>[10]{}\ensuremath{\mathbf{match}}\;\Varid{parse}\;\Varid{e}\;\Varid{s}\;\ensuremath{\mathbf{with}}{}\<[E]%
\\
\>[4]{}\hsindent{6}{}\<[10]%
\>[10]{}\mid \Conid{PR\char95 ok}\;\Varid{s'}\;\Varid{v}\Rightarrow \Conid{Ok}\;\Varid{s'}\;(\Varid{f}\;\Varid{v}){}\<[E]%
\\
\>[4]{}\hsindent{6}{}\<[10]%
\>[10]{}\mid \Conid{PR\char95 fail}\Rightarrow \Conid{Fail}{}\<[E]%
\\
\>[4]{}\hsindent{6}{}\<[10]%
\>[10]{}\ensuremath{\mathbf{end}}{}\<[E]%
\\
\>[4]{}\mid \mathbin{...}{}\<[E]%
\\
\>[4]{}\ensuremath{\mathbf{end}}{}\<[E]%
\ColumnHook
\end{hscode}\resethooks
The termination argument for this function is based on the decrease 
of the pair of arguments \ensuremath{(\Varid{e},\Varid{s})} in recursive calls with respect to the following
relation $\succ$:
\begin{multline*}
  {(e_1, s_1) \succ (e_2, s_2)} \quad{\iff}\quad
  {
  \exists_{n_1, r_1, n_2, r_2}\ 
    {\pegsemx{e_1}{s_1}{n_1}{r_1}} \land 
    {\pegsemx{e_2}{s_2}{n_2}{r_2}} \land 
    {n_1 > n_2}
  }
\end{multline*}
So $(e_1, s_1)$ is bigger than $(e_2, s_2)$ in the order if its
step-count in the semantics is bigger. The relation $\succ$ is
clearly well-founded, due to the last conjunct with~$>$, the 
well-founded order on $\nat$. Since the semantics
of $\GG$ is complete (due to Theorem~\ref{th:pegs-wf-total} and
the check for well-formedness of $\GG$ as described in 
Section~\ref{sec:trx-wf}) we can prove that all recursive calls
are indeed decreasing with respect to $\succ$.

Clearly this function also generates a number of proof obligations for
expressing correctness of the returned result with respect to
the semantics of PEGs. Dismissing them is actually rather straightforward,
due to the fact that the implementation of the interpreter and the
operation semantics of PEGs are very close to each other. That
means that \emph{by far the majority of our work was in establishing
termination, not correctness}.
\section{Extracting a Parser: Practical Evaluation}\label{sec:pegs-ex}

In the previous section we described a formal development of an XPEG interpreter in the proof assistant \coq. This should allow us for an arbitrary, well-formed XPEG $\GG$, to specify it in \coq\ and, using \coq's extraction capabilities \cite{Let08}, to obtain a certified parser for $\GG$. 
We are interested in code extraction from \coq, to ease practical use of TRX and to improve its performance.
At the moment target languages for extraction from \coq\ are OCaml~\cite{ocaml}, Haskell~\cite{haskell} and 
Scheme~\cite{SusSte98}.
We use the FSets~\cite{FilLet04} library (part of the \coq\ standard library for manipulation of the set data-type) developed using \coq's modules and functors~\cite{Chr03}, which are not yet supported by extraction to Haskell or Scheme.
However, there is an ongoing work on porting FSets to type classes~\cite{SozOur08},
which are supported by extraction.

First, in Section~\ref{sec:eval-improve}, we will sketch the various 
performance-related improvements that we made along our development and 
present case studies on two examples: XML and Java. Then
in Section~\ref{sec:eval-cmp} we will present a benchmark of certified TRX
again a number of other tools on those two examples.


\subsection{Case study of TRX on XML and Java}\label{sec:eval-improve}

A well-known issue with extraction is the performance of obtained programs~\cite{CruLet06,Let08}. Often the root of this problem is the fact that many formalizations are not developed with extraction in mind and trying to extract a computational part of the proof can easily lead to disastrous performance \cite{CruLet06}. On the other hand the CompCert project~\cite{Ler09} is a well-known example of extracting a certified compiler with satisfactory performance from a \coq\ formalization.

As most of TRX's formalization deals with grammar well-formedness, which should be discarded in the extracted code, we aimed at comparable performance for certified TRX and its non-certified counterpart that we prototyped manually. We found however that the first version's performance was unacceptable and required several improvements, which we will discuss in the remainder of this section.

We started with a case study of XML using an XML PEG developed internally at MLstate.
The first extracted version of TRX-cert parsed 32kB of XML in more than one minute.  To our big surprise, performance was somewhere between quadratic and cubic with rather large constants.  To our even bigger surprise, inspection of the code revealed that the \coqid{rev} function from \coq's standard library (from the module \coqid{Coq.Lists.List}) that reverses a list was the source of the problem. The \coqid{rev} function is implemented using \coqid{append} to concatenate lists at every step, hence yielding quadratic time complexity.

We used this function to convert the input from OCaml strings to the extracted type of \coq\ strings. This is another difficulty of working with extracted programs: all the data-types in the extracted program are defined from scratch and combining such programs with un-certified code, even just to add a minimal front-end, as in our case, sometimes requires translating back and forth between OCaml's primitive types and the extracted types of \coq.

Fixing the problem with \coqid{rev} resulted in a linear complexity but the constant was still unsatisfactory. We quickly realized that implementing the range operator by means of repeated choice is suboptimal as a common class of letters $\prange{a}{z}$ would lead to a composition of 26 choices. Hence we extended the semantics of XPEGs with semantics of the range operator and instead of deriving it implemented it ``natively''.

Yet another surprise was in store for us as the performance instead of improving got worse by approximately $30\%$. This time the problem was the fact that in \coq\ there is no predefined polymorphic comparison operator (as in OCaml) so for the range operation we had to implement comparison on characters. We did that by using the predefined function from the standard library converting a character to its ASCII code. And yet again we encountered a problem that the standard library is much better suited for reasoning than computing: this conversion function uses natural numbers in Peano representation. By re-implementing this function using natural numbers in binary notation (available in the standard library) we decreased the running time by a factor of $2$.

Further profiling the OCaml program revealed that it spends $85\%$ of its time performing garbage collection (GC). By tweaking the parameters of OCaml's GC, we obtained an important $3$x gain, leading to TRX-cert's current performance as presented in the following section. We believe a more careful inspection will reveal more potential sources of improvements, as there is still a gap between the performance that we reached now and the one of our prototype written by hand.

We continued with a more realistic case study based on parsing the Java language,
using the PEG for Java developed by Redziejowski~\cite{Red07}. The grammar, consisting 
of 216 rules, was automatically translated to TRX format. We immediately hit performance 
problems as our encoding contains a type enumerating all the rules (\coqid{prod}) and 
proving that equality is decidable on this type, using Coq's \coqid{decide equality} tactic, 
took initially $927$ sec. ($\approx 15$ minutes). We were able to improve it
by writing a tactic dedicated to such simple enumeration types (using Coq's Ltac language)
and decrease this time to $104$ sec. 

We did not meet any more scaling difficulties. Testing XML and Java
grammars for well-formedness, with the extracted Ocaml code, took, respectively,
$0.1$ and $0.7$ sec. (this test needs to be performed only once). We will
discuss the performance of the parsing itself, and compare it with other tools,
in the following section.

\subsection{Performance comparison}\label{sec:eval-cmp}

For our benchmarking experiment, see Figure~\ref{fig:trx-perf} on the following page, we used the following tools:
\begin{enumerate}[\hbox to8 pt{\hfill}]                                        
 \item\noindent{\hskip-12 pt\bf JAXP:}\ a reference implementation for the XML parser, using a DOM parser
   of the ``Java API for XML processing'', JAXP~\cite{JAXP}. 
 \item\noindent{\hskip-12 pt\bf JavaCC:}\ a Java parser~\cite{JavaParser} written in Java using 
   JavaCC~\cite{JavaCC} parser generator.
 \item\noindent{\hskip-12 pt\bf TRX-cert:}\ the certified TRX interpreter, which is the subject of
   this paper and is described in more detail in Section~\ref{sec:pegs-int}.
 \item\noindent{\hskip-12 pt\bf TRX-gen:}\ MLstate's own production-used PEG-based parser generator (for 
   experiments we used its simple version without memoization).
 \item\noindent{\hskip-12 pt\bf TRX-int:}\ a simple prototype with comparable functionality to
   TRX-cert, though developed manually. 
 \item\noindent{\hskip-12 pt\bf Mouse:}\ a PEG-based parser generator, with no memoization, implemented 
   in Java by Redziejowski~\cite{Mouse}.
\end{enumerate}

\tikzstyle barchart=[fill=black!20,draw=black]
\tikzstyle baremph=[fill=black!40,draw=black]
\tikzstyle barscale=[very thin,draw=black!75]

\newcommand{\score}[2]{
  \begin{minipage}[c]{5cm}
  \begin{tikzpicture}
    \draw (0cm,0cm) (5,0.5);
    \draw[#1] (0,0.10) rectangle (#2, 0.40);
  \end{tikzpicture}
  \end{minipage}
}
\begin{figure}[t!]
\begin{center}
\begin{tabular}{| l | r@{.}l@{s.\,} c | r@{.}l@{s.\,} c |}
\toprule
\multicolumn{1}{|c|}{tool} & \multicolumn{3}{c|}{XML parser} & \multicolumn{3}{c|}{Java parser} \\
\midrule
JAXP       &   2&3 & \score{barchart}{0.06} &  \multicolumn{3}{c|}{}         \\
JavaCC     & \multicolumn{3}{c|}{}          &  23&0 & \score{barchart}{0.17} \\
TRX-gen    &   5&1 & \score{barchart}{0.12} &  25&5 & \score{barchart}{0.19} \\
TRX-int    &  40&0 & \score{barchart}{0.97} & 289&3 & \score{barchart}{2.18} \\
TRX-cert   & 128&9 & \score{baremph} {3.12} & 662&4 & \score{baremph} {5.00} \\
Mouse      & 206&4 & \score{barchart}{5.00} & 269&6 & \score{barchart}{2.04} \\
\bottomrule
\end{tabular}
\end{center}
\caption{Performance of certified TRX (TRX-cert) compared to a number of other tools on the examples of parsing 
Java and XML.}
\label{fig:trx-perf}
\end{figure}

Figure~\ref{fig:trx-perf} plots performance of the aforementioned tools on two benchmarks:
\begin{enumerate}[\hbox to8 pt{\hfill}]    
 \item\noindent{\hskip-12 pt\bf XML:}\ 10 XML files with a total size of 40MB generated using 
   the XML benchmarks generator XMark~\cite{XMark}. 
 \item\noindent{\hskip-12 pt\bf Java:}\ a complete source code of the J2SE JDK 5.0 consisting of
   nearly 11.000 files with a total size of 117MB. 
\end{enumerate}\smallskip

\noindent The most interesting comparison is between TRX-cert and TRX-int. The latter was essentially a prototype of the former but developed manually, whereas TRX-cert is extracted from a formal \coq\ development. At the moment the certified version is approximately $2-3$x slower. In principle this difference can be attributed either to the verification overhead (computations that are but should not be performed, as they are part of the logical reasoning to prove correctness and not of the actual algorithm), extraction overhead (sub-optimal code generated by the extraction process) or algorithmic overhead (the algorithm that we coded in \coq\ is sub-optimal in itself).

We believe there is no \emph{verification overhead} in TRX-cert, as all the correctness proofs are discarded by the process of extraction and we never used the proof mode of \coq\ to define objects with computational content (which are extracted).

The \emph{extraction overhead} in our case mainly manifests itself in many dispensable conversions. For instance the second component of the sigma type \ensuremath{\hspace{0.9mm}\{\Varid{x}\mathbin{:}\Conid{T}\mid \Conid{P}\;(\Varid{x})\mskip1.5mu\}} is discarded during the extraction, so such a type is extracted simply as \ensuremath{\Conid{T}} and the first projection function \ensuremath{\Varid{proj1\char95 sig}} as identity. Since sigma types are used extensively in our verification, the extracted code is full of such vacuous conversions. However, our experiments seem to indicate that Ocaml's compiler is capable of optimizing such code, so that this should have no noticeable impact on performance.

Apart from those two types of overheads associated with extraction, often the sub-optimal extracted code can be tracked back to sub-optimal code in the development itself or in \coq\ libraries. We already mentioned few of such problems in Section~\ref{sec:eval-improve}. We believe another one is the model of characters from the standard library of \coq, \ensuremath{\Conid{\Conid{Coq}.\Conid{Strings}.Ascii}}, which we used in this work. The characters are modeled by 8 booleans, \ie, 8 bits of the character:\begin{hscode}\SaveRestoreHook
\column{B}{@{}>{\hspre}l<{\hspost}@{}}%
\column{3}{@{}>{\hspre}l<{\hspost}@{}}%
\column{E}{@{}>{\hspre}l<{\hspost}@{}}%
\>[3]{}\ensuremath{\mathbf{Inductive}}\;\Varid{ascii}\mathbin{:}\Conid{Set}\mathbin{:=}\Conid{Ascii}\;(\anonymous \;\anonymous \;\anonymous \;\anonymous \;\anonymous \;\anonymous \;\anonymous \;\anonymous \mathbin{:}\Varid{bool}).{}\<[E]%
\ColumnHook
\end{hscode}\resethooks
Not surprisingly such characters induce larger memory footprint and also comparison between such structures is much less efficient than between native (1-byte) characters of Ocaml. There is an on-going work on improving interplay between Ocaml's native types and their \coq\ counter-parts, which should hopefully address this problem.
\bigskip

However, the main opportunity for improving performance seems to be in switching
from interpretation to \emph{code generation}. As witnessed by the difference between
TRX-int and TRX-gen this can have a very substantial impact on performance. We will
say some more about that in discussion in Section~\ref{sec:discussion}.


It is worth noting that the performance of TRX-cert is quite competitive when compared
with Java code generated by Mouse.


We would like to conclude this section with the observation that even though 
making such benchmarks is important it is often just one of many factors for
choosing a proper tool for a given task. There are many applications which
will never parse files exceeding $100kB$ and it is often irrelevant whether
that will take $0.1s.$ or $0.01s.$ For some of those applications it may
be much more relevant that the parsing is formally guaranteed to be correct.
\section{Related Work}\label{sec:relwork}

Parsing is a well-studied and well-understood topic and the software
for parsing, parser generators or libraries of parser combinators,
is abundant. And yet there does seem to be hardly any work on
\emph{formally verified} parsing.

Danielsson~\cite{Dan10} develops a library of parser combinators
(see Hutton \cite{Hut92}) with termination guarantees in the dependently
typed functional programming language Agda~\cite{Agda} (see also
joined work with Norell~\cite{DanNor08}). The main difference in comparison with
our work is that Danielsson provides a library of combinators, whereas we
aim at a parser generator for PEG grammars (though at the moment we
only have an interpreter). Perhaps more importantly, the approach of
Danielsson allows many forms of left recursion, which we cannot handle
at present. Another difference is in the way termination is ensured: Danielsson
uses dependent types to extend type of parser combinators with the
information about whether or not they accept the empty string; which
is subsequently used to guarantee termination. In contrast we use
deep embedding of the grammar and a reflective procedure to check whether
a given grammar is terminating. Some consequences of those choices will
be explored in more depth in the following section.

Ideas similar to Danielsson and Norell \cite{DanNor08} were previously put forward, 
though just as a proof of concept, by McBride and McKinna \cite{BriKin02}.

Probably the closest work to ours is that of Barthwal and Norrish~\cite{BarNor09}, where the authors 
developed an SLR parser in HOL. The main differences with our work are:
\begin{iteMize}{$\bullet$}
\item PEGs are more expressive that SLR grammars, which are
  usually not adequate for real-world computer languages,
\item as a consequence of using PEGs we can deal with lexical analysis, while it would have 
  to be formalized and verified in a separate stage for the SLR approach.
\item our parser is proven to be totally correct, \ie, correct with respect to its specification 
  and terminating on all possible inputs (which was actually far more difficult to establish 
  than correctness), while the latter property does not hold for the work of Barthwal and Norrish.
 \item performance comparison with this work is not possible as the paper does not present 
   any case-studies, benchmarks or examples, but the fact that ``the DFA states are computed on the fly'' 
   \cite{BarNor09} suggests that the performance was not the utmost goal of that work.
\end{iteMize}\smallskip

\noindent Finally there is the recent development of a packrat PEG
parser in Coq by Wisnesky et al. \cite{WisEA}, where the given PEG
grammar is compiled into an imperative computation within the Ynot
framework, that when run over an arbitrary imperative character
stream, returns a parsing result conforming with the specification of
PEGs. Termination of such generated parsers is not guaranteed.

\section{Discussion and Future Work}\label{sec:discussion}

One of the main challenges in developing a certified parser is ensuring its termination.
In this paper we presented an extrinsic approach to this problem: we use a deep embedding 
to represent parsing expressions in Coq and then develop a certified algorithm to
verify that a given PEG is well-formed. We then express the parser (interpreter) with
non-structural recursion and the well-formedness of the grammar allows us to justify
that the recursion is well-founded.

There is an alternative, intrinsic approach to the problem of termination, which is, 
for instance, used by Danielsson \cite{DanNor08, Dan10}, as mentioned in the
previous section. They develop a library of parser combinators and use the type
system of the host language -- in this case, Agda -- to restrict the parser combinators
to well-formed ones.

This is a very attractive approach, as by cleverly using the type system of the host
language we obtain certain verified properties for free, hence decreasing the formalization overhead.
However, it has the usual drawback of a shallow embedding approach: it is tied to
the host language, i.e. Danielsson's parsers must unavoidably be written
in Agda. 

At the moment the same is true about our work: to use certified TRX, as presented
in this paper, the grammar must be expressed in Coq. However, this is not a necessity 
with our approach, as we will sketch in a moment. The motivation for avoiding the need 
to use Coq is clear: this could make our certified parser technology usable for people 
outside of the small community of theorem provers (Coq, in particular) experts.

As our work uses deep embedding of parsing expressions, it should be possible to turn
it into a \textit{generic parser generator}.
Doing so could be accomplished by \textit{bootstrapping} TRX: it should be possible to 
write a grammar in it that would synthesize a PEG in Coq (in our format; 
Section~\ref{sec:trx-spec-peg}) from its textual description. After this transformation 
the grammar could be checked for well-formedness (with our generic procedure for
checking well-formedness of PEGs; Section~\ref{sec:trx-wf}) finally allowing parsing
with this grammar (with our interpreter; Section~\ref{sec:trx-int}). This would 
result (via extraction) in a tool that would be capable of parsing grammars expressed 
in a simple textual markup, hence surpassing any need to use/know Coq for the users of 
such a tool.

The main difficulty with obtaining such a tool lies in the bootstrapping process. To do
so we would need a kind of a higher-order grammar: a PEG formally describing its own syntax,
that would take a textual description of a grammar and turn it into a PEG in our format.
Such a grammar would need to have the type \ensuremath{\Conid{PExp}\;(\Conid{PExp}\;(\anonymous ))} and, as already hinted in
Section~\ref{sec:trx-spec-peg}, with our present encoding, that would lead to universe
inconsistency problems. Also, our current use of module system precludes such use-case
as modules are not first-class citizens in \coq\ and one cannot construct higher-order
functors.

But there is a more fundamental problem here: how do we synthesize semantic actions
from their textual description? If the semantics actions were to be expressed in
the calculus of constructions of \coq, the way they are now, this seems to be futile.

Let us step back a bit for a moment and consider a simpler problem: what if we only
wanted a \emph{recognizer}, \ie, a parser that does not return any result, but only
indicates whether a given string is in the language described by the grammar or not.
To address the aforementioned problem with modules (\cite{Chr03}) we could switch to type
classes (\cite{SozOur08}) instead. Then we could build a generic recognizer
as follows (pseudo-code):
\begin{hscode}\SaveRestoreHook
\column{B}{@{}>{\hspre}l<{\hspost}@{}}%
\column{3}{@{}>{\hspre}l<{\hspost}@{}}%
\column{5}{@{}>{\hspre}l<{\hspost}@{}}%
\column{E}{@{}>{\hspre}l<{\hspost}@{}}%
\>[3]{}\ensuremath{\mathbf{Definition}}\;\Conid{PEG\char95 grammar}\mathbin{:}\Conid{PExp}\;\Varid{pexp}\mathbin{:=}\mathbin{...}{}\<[E]%
\\[\blanklineskip]%
\>[3]{}\ensuremath{\mathbf{Program}}\;\ensuremath{\mathbf{Definition}}\;\Varid{do\char95 parse}\;(\Varid{grammar}\mathbin{:}\Varid{string})\;(\Varid{input}\mathbin{:}\Varid{string})\mathbin{:=}{}\<[E]%
\\
\>[3]{}\hsindent{2}{}\<[5]%
\>[5]{}\ensuremath{\mathbf{match}}\;\Varid{parse}\;\Conid{PEG\char95 grammar}\;\Varid{grammar}\;\ensuremath{\mathbf{with}}{}\<[E]%
\\
\>[3]{}\hsindent{2}{}\<[5]%
\>[5]{}\mid \Conid{PR\char95 ok}\;\anonymous \;\Varid{peg}\Rightarrow \Varid{parse}\;(\Varid{promote}\;\Varid{peg})\;\Varid{input}{}\<[E]%
\\
\>[3]{}\hsindent{2}{}\<[5]%
\>[5]{}\mid \Conid{PR\char95 fail}\Rightarrow \Conid{PR\char95 fail}{}\<[E]%
\\
\>[3]{}\hsindent{2}{}\<[5]%
\>[5]{}\ensuremath{\mathbf{end}}.{}\<[E]%
\ColumnHook
\end{hscode}\resethooks
Here \ensuremath{\Conid{PEG\char95 grammar}} is the grammar for PEGs. The main \ensuremath{\Varid{do\char95 parse}} function takes
two arguments: \ensuremath{\Varid{grammar}} with the textual description of the grammar to use and
\ensuremath{\Varid{input}} being the \ensuremath{\Varid{input}} which we want to parse using the given \ensuremath{\Varid{grammar}}. We use
\ensuremath{\Conid{PEG\char95 grammar}} to parse \ensuremath{\Varid{grammar}} and, hopefully, obtain its internal representation
\ensuremath{\Varid{peg}\mathbin{:}\Varid{pexp}}, in which case we again invoke \ensuremath{\Varid{parse}} with \ensuremath{\Varid{promote}\;\Varid{peg}} grammar and
\ensuremath{\Varid{input}} as the input string. Extracting \ensuremath{\Varid{do\char95 parser}} would give us a generic
recognizer, that could be used without Coq (or any knowledge thereof).

Admittedly, in practice we are rarely interested in merely validating the input;
usually we really want to \textit{parse} it, obtaining its structural representation.
How can the above approach be extended to accommodate that and still result in a
stand-alone tool, not requiring interaction with Coq?

One option would be to move from interpretation to code generation and then using 
the target language to express semantic actions. An additional advantage is that this
should result in a big performance gain (compare the performance of TRX and TRX-int in 
Figure~\ref{fig:trx-perf}). But that would be a major undertaking requiring 
reasoning with respect to the target language's semantics for the correctness proofs
and some sort of (formally verified) termination analysis for that language, to 
ensure termination of the code of semantic actions (and hence the generated parser). 

The aforementioned termination problem for a parser generator could be simplified 
by restricting the code allowed in semantic actions to some subset of the target 
language, which is still expressive enough for this purpose but for which the termination 
analysis is simpler. For instance for a purely functional target language one could
disallow recursion altogether in productions (making termination evident), only allowing 
use of some predefined set of combinators (to improve expressivity of semantic actions), 
which could be proven terminating manually.

Another solution would be not to use semantic actions altogether, but construct a
parse tree, the shape of which could be influenced by annotations in the grammar.
This is the approach used, for instance, in the Ocaml PEG-based parser generator
Aurochs~\cite{aurochs}. We believe this is a promising approach that we hope
to explore in the future work.

\bigskip
A complete different approach to developing a practical, certified parser generator
would be the standard technique of verification \emph{a posteriori}: use an untrusted
parser that, apart from its result, generates some sort of a certificate (parse tree)
and develop a (formally correct) tool to verify, using the certificate, that the output
of the tool (for a given input and given grammar) is correct. The attractiveness of this
approach lies in the fact that such a verifier would typically be much
simpler than the parser itself. There are two problems with this approach though:
\begin{iteMize}{$\bullet$}
 \item this approach could at best give us partial correctness guarantees, as we
   would not be able to ensure termination of the un-trusted parser (unless we
   also prove it in some way);
 \item if the parsing is successul it is relatively clear what a certificate should be
   (parse tree), but what if it is not? How can we certify incorrectness of input
   with respect to the grammar?
\end{iteMize}

\bigskip
Apart from making the certified TRX a Coq independent, standalone tool and moving
from interpretation to code generation we also identify a number of other possible 
improvements to TRX as future work:

\begin{enumerate}[(1)]
 \item\label{future:performance} 
   Linear parsing time with PEGs can be ensured by using packrat parsing~\cite{For02},
   \ie, enhancing the parser with memoization. This should be relatively easy to 
   implement (it has, respectively, no and little impact on the termination and 
   correctness arguments for certified TRX), but induces high memory costs
   (and some performance overhead), so it is not clear whether this would be 
   beneficial. An alternative would be to develop (formally verified?) tools
   to perform grammar analysis and warn the user in case the grammar can lead to
   exponential parsing times.

 \item Another important aspect is that of left-recursive grammars, which
   occur naturally in practice. At the moment it is the responsibility of
   the user to eliminate left-recursion from a grammar. In the future, we
   plan to address this problem either by means of left-recursion elimination
   \cite{For02mth}, \ie, transforming a left-recursive grammar to an equivalent 
   one where left-recursion does not occur (this is not an easy problem in 
   presence of semantic actions, especially if one also wants to allow mutually 
   left-recursive rules). Another possible approach is an extension to the 
   memoization technique that allows dealing with left-recursive rules~\cite{WarEA08}.

 \item\label{future:errmsg}
   Finally support for \emph{error messages}, for instance following that of
   the PEG-based parser generator Puppy~\cite{For02mth}, would greatly
   improve usability of TRX.
\end{enumerate}
\section{Conclusions}\label{sec:concl}

In this paper we described a \coq\ formalization of the theory of PEGs and,
based on it, a formal development of \emph{\TRX: a formally verified parser 
interpreter for PEGs}. This allows us to write a PEG, together with its 
semantic actions, in \coq\ and then to extract from it a \emph{parser with 
total correctness guarantees}. That means that the parser will terminate 
on all inputs and produce parsing results correct with respect to the 
semantics of PEGs. 
Considering the importance of parsing, this result appears as a first step 
towards a general way to bring added quality and security to all kinds of software .

The emphasis of our work was on \emph{practicality}, so apart from treating
this as an interesting academic exercise, we were aiming at obtaining a 
tool that scales and can be applied to real-life problems. We performed
a case study with a (complete) Java grammar and demonstrated that the resulting 
parser exhibits a reasonable performance. We also stressed the importance
of making those results available to people outside of the small circle 
of theorem-proving experts and presented a plan of doing so as future
work.


\subsection*{Acknowledgments}
We would like to thank Matthieu Sozeau for his invaluable help with the
Program feature~\cite{Soz07} of \coq\ and the anonymous referees for their
helpful comments, which greatly improved presentation of this paper.
Also the very pragmatic (and immensely helpful) book of Chlipala~\cite{Chl09},
as well as friendly advice from people on Coq's mailing list turned out to
be invaluable in the course of this work.


\bibliographystyle{alpha}
\bibliography{paper}

\end{document}